\Crefname{algorithm}{Algorithm}{Algorithms}
\Crefname{section}{Sect.}{Sects.}
\Crefname{observation}{Observation}{Observations}
\Crefname{redrule}{Reduction Rule}{Reduction Rules}
\Crefname{lemma}{Lemma}{Lemmas}
\Crefname{lemma2}{Lemma}{Lemmas}
\Crefname{theorem2}{Theorem}{Theorems}
\Crefname{claim}{Claim}{Claims}
\Crefname{cl}{Claim}{Claims}
\Crefname{claimx}{Claim}{Claims}
\Crefname{figure}{Fig.}{Figs.}
\Crefname{enumi}{Condition}{Conditions}
\Crefname{property}{Property}{Properties}
\Crefname{assumption}{Assumption}{Assumptions}
\DeclareMathOperator{\bw}{\rm bw}
\DeclareMathOperator{\tw}{\rm tw}
\DeclareMathOperator{\midset}{mid}
\newcommand{\Oh}{\mathcal{O}}
\newtheorem{redrule}{Reduction Rule}{}
\newtheorem{observation}{Observation}{}
\newtheorem{cl}{Claim}{}
\newcommand{\qedclaim}{\hfill $\blacksquare$}
\newenvironment{claimproof}{\noindent{\itshape Proof.}}{~\hfill \qedclaim \smallskip}
\definecolor{antiquewhite}{rgb}{0.98, 0.92, 0.84}
\newcommand{\LR}[1]{\left\{#1\right\}}
\newcommand{\gbar}{\overline{G}}
\begin{document}
\title{Parameterized and Approximation Algorithms for the Maximum Bimodal Subgraph Problem\thanks{Research started at the Dagstuhl Seminar 23162: New Frontiers of Parameterized Complexity in Graph Drawing, April 2023, and partially supported by: $(i)$ University of Perugia, Ricerca Base 2021, Proj. ``AIDMIX — Artificial Intelligence for
Decision Making: Methods for Interpretability and eXplainability''; $(ii)$ MUR PRIN Proj. 2022TS4Y3N - ``EXPAND: scalable algorithms for EXPloratory Analyses of heterogeneous and dynamic Networked Data'', $(iii)$ MUR PRIN Proj. 2022ME9Z78 - ``NextGRAAL: Next-generation algorithms for constrained GRAph visuALization'', $(iv)$ the Research Council of
Norway project BWCA 314528, $(v)$ the European Research Council (ERC) grant LOPPRE 819416, and $(vi)$ NSF-CCF 2212130}.
}

\titlerunning{Parameterized and Approximation Algorithms for the MBS Problem}


\author{
Walter Didimo\inst{1}\texorpdfstring{\href{https://orcid.org/0000-0002-4379-6059}{\protect\includegraphics[scale=0.45]{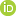}}}{}
\and
Fedor V. Fomin\inst{2}\texorpdfstring{\href{https://orcid.org/0000-0003-1955-4612}{\protect\includegraphics[scale=0.45]{orcid}}}{}
\and
Petr A. Golovach\inst{2}\texorpdfstring{\href{https://orcid.org/0000-0002-2619-2990}{\protect\includegraphics[scale=0.45]{orcid}}}{}
\and
Tanmay Inamdar\inst{2}\texorpdfstring{\href{https://orcid.org/0000-0002-0184-5932}{\protect\includegraphics[scale=0.45]{orcid}}}{}
\and
Stephen Kobourov\inst{3}\texorpdfstring{\href{https://orcid.org/0000-0002-0477-2724}{\protect\includegraphics[scale=0.45]{orcid}}}{}
\and
Marie Diana Sieper\inst{4}\texorpdfstring{\href{https://orcid.org/0009-0003-7491-2811}{\protect\includegraphics[scale=0.45]{orcid}}}{}
}

\date{}

\institute{
Dept.~of Engineering, University of Perugia, Italy\\
\email{walter.didimo@unipg.it}\and
Department of Computer Science, University of Bergen, Norway\\
\email{fedor.fomin,petr.golovach,tanmay.inamdar@uib.no}\and
Department of Computer Science, University of Arizona, USA\\
\email{kobourov@cs.arizona.edu}
\and
Department of Computer Science, University of W\"urzburg, Germany\\
\email{marie.sieper@uni-wuerzburg.de}}
\maketitle

\begin{abstract}

%
%

A vertex of a plane digraph is \emph{bimodal} if all its incoming edges (and hence all its outgoing edges) are consecutive in the cyclic order around it. 
A plane digraph is bimodal if all its vertices are bimodal.
Bimodality is at the heart of many types of graph layouts, such as upward drawings, level-planar drawings, and L-drawings. If the graph is not bimodal, the \emph{Maximum Bimodal Subgraph (MBS)} problem asks for an embedding-preserving bimodal subgraph with the maximum number of edges.
We initiate the study of the MBS problem from the parameterized complexity perspective with two main results: (i) we describe an FPT algorithm parameterized by the branchwidth (and hence by the treewidth) of the graph; (ii) we establish that MBS parameterized by the number of non-bimodal vertices admits a polynomial kernel. 
As the byproduct of these results, we obtain a subexponential FPT algorithm and an efficient polynomial-time approximation scheme for MBS.

\keywords{bimodal graphs, maximum bimodal subgraph, parameterized complexity, FPT algorithms, polynomial kernel, approximation scheme}
\end{abstract}

\section{Introduction}\label{se:introduction}
Let $G$ be a plane digraph, that is, a planar directed graph with a given planar embedding. A vertex $v$ of $G$ is \emph{bimodal} if all its incoming edges (and hence all its outgoing edges) are consecutive in the cyclic order around $v$. In other words, $v$ is bimodal if the circular list of edges incident at $v$ can be split into at most two linear lists, where all edges in the same list are either all incoming or all outgoing~$v$. 
Graph $G$ is \emph{bimodal} if all its vertices are bimodal. Bimodality is a key property at heart of many graph drawing styles. In particular, it is a necessary condition for the existence of \emph{level-planar} and, more generally, \emph{upward planar} drawings, where the edges are represented as curves monotonically increasing in the upward direction according to their orientations~\cite{DBLP:books/ph/BattistaETT99,DBLP:journals/tsmc/BattistaN88,DBLP:reference/algo/Didimo16,DBLP:conf/gd/JungerLM98}; see \Cref{fi:intro-a}. 
 Bimodality is also a sufficient condition for \emph{quasi-upward planar} drawings, in which edges are allowed to violate the upward monotonicity a finite number of times at points called \emph{bends}~\cite{DBLP:journals/algorithmica/BertolazziBD02,DBLP:conf/gd/BinucciGLT21,DBLP:journals/cj/BinucciD16}; see \Cref{fi:intro-b}. 
 It has been shown that bimodality is also a sufficient condition for the existence of \emph{planar L-drawings} of digraphs, in which distinct L-shaped
 edges may overlap but not cross~\cite{DBLP:journals/jgaa/AngeliniCCL22,DBLP:conf/mfcs/AngeliniCCL22,DBLP:journals/ijfcs/KariOALBDPRT18a}; see \Cref{fi:intro-c}. 
A generalization of bimodality is $k$-modality. Given a positive even integer $k$, a plane digraph is \emph{$k$-modal} if the edges at each vertex can be grouped into at most $k$ sets of consecutive edges with the same orientation~\cite{DBLP:conf/esa/VialLG19}. 
  In particular, it is known that $4$-modality is necessary for planar L-drawings~\cite{DBLP:conf/gd/ChaplickCCLNPTW17}.       

While testing if a digraph $G$ admits a bimodal planar embedding can be done in linear time~\cite{DBLP:journals/algorithmica/BertolazziBD02}, a natural problem that arises when $G$ does not have such an embedding is to extract from $G$ a subgraph of maximum size (i.e., with the maximum number of edges) that fulfills this property. This problem 
is NP-hard, even if $G$ has a given planar embedding and we look for an embedding-preserving maximum bimodal subgraph~\cite{DBLP:journals/comgeo/BinucciDG08}. We address exactly this fixed-embedding version of the problem, and call it the \emph{Maximum Bimodal Subgraph} (\textsc{MBS}) problem.

\begin{figure}[tb]
    \centering
    \subfigure[]{\includegraphics[page=1,width=.32\textwidth]{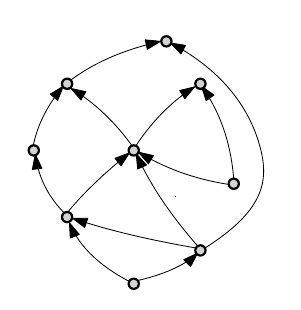}\label{fi:intro-a}}
    \hfil
    \subfigure[]{\includegraphics[page=2,width=.32\textwidth]{intro}\label{fi:intro-b}}
    \hfil
    \subfigure[]{\includegraphics[page=3,width=.32\textwidth]{intro}\label{fi:intro-c}}
    \caption{(a) An upward planar drawing. (b) A quasi-upward planar drawing, where edge $e$ makes two bends (the two horizontal tangent points). (c) A bimodal digraph (above) and a corresponding planar L-drawing (below).}
    \label{fi:intro}
\end{figure}

\smallskip\noindent{\bf Contribution.} While a heuristic and a branch-and-bound algorithm are given in~\cite{DBLP:journals/comgeo/BinucciDG08} to solve MBS (and also to find a maximum upward-planar digraph), here we study this problem from the parameterized complexity and approximability perspectives (refer to~\cite{CyganFKLMPPS15,FominLSZ19} for an introduction to  parameterized complexity). More precisely, we consider the following more general version of the problem with weighted edges; it coincides with MBS when we restrict to unit edge weights.  

\smallskip \noindent\textsc{MWBS$(G,w)$} (\emph{Maximum Weighted Bimodal Subgraph}). \emph{Given a plane digraph~$G$ and an edge-weight function $w: E(G) \rightarrow \mathbb{Q}^+$, compute a bimodal subgraph of $G$ of maximum weight, i.e., whose sum of the edge weights is maximum over all bimodal subgraphs of $G$.} 


\smallskip\noindent Our contribution can be summarized as follows.

\smallskip \noindent\textsl{$-$ Structural parameterization.} 
We show that  \textsc{MWBS} is FPT  when parameterized by the \emph{branchwidth} 
of the input digraph $G$ or, equivalently, by the \emph{treewidth} of $G$ (\Cref{se:fpt-branchwidth}). Our algorithm deviates from a standard dynamic approach for graphs of bounded treewidth. The main difficulty here is that we have to incorporate the ``topological'' information about the given embedding
in the dynamic program. 
We accomplish this via the sphere-cut decomposition of Dorn et al.~\cite{DBLP:journals/algorithmica/DornPBF10}. 
  
\smallskip \noindent\textsl{$-$ Kernelization.} Let $b$ be the number of non-bimodal vertices in an input digraph~$G$.  We construct a polynomial kernel for the decision version of \textsc{MWBS} parameterized by $b$ (\cref{se:fpt-b}). Our kernelization algorithm performs in several steps. First we show how to reduce the instance to an equivalent instance whose branchwidth is $\Oh(\sqrt{b})$. Second, by using specific gadgets, we compress the problem to an instance of another problem whose size is bounded by a polynomial of $b$. In other words, we provide a polynomial compression for \textsc{MWBS}. Finally, by the standard arguments, \cite[Theorem~1.6]{FominLSZ19},  based on a polynomial reduction between any NP-complete problems, we obtain a polynomial kernel for \textsc{MWBS}. 
 
 
\smallskip
By pipelining the crucial step of the  kernelization algorithm with the branchwidth algorithm,  
we obtain a parameterized subexponential algorithm for \textsc{MWBS} of running time $2^{\Oh(\sqrt{b})}\cdot n^{\Oh(1)}$. Since 
$b\leq n$, this also implies an algorithm of running time  $2^{\Oh(\sqrt{n})}$.
%
%
Note that our algorithms are asymptotically optimal up to the \emph{Exponential Time Hypothesis} (ETH)~\cite{ImpagliazzoP99,ImpagliazzoPZ01}.  
The NP-hardness result of MBS (and hence of MWBS) given in~\cite{DBLP:journals/comgeo/BinucciDG08} exploits a reduction from \textsc{Planar-3SAT}. The number of non-bimodal vertices in the resulting instance of MBS is linear in the size of the \textsc{Planar-3SAT} instance. Using the standard techniques for computational lower bounds for problems on planar graphs \cite{CyganFKLMPPS15}, we obtain that the existence of an $2^{o(\sqrt{b})} \cdot n^{\Oh(1)}$-time algorithm for MBWS would contradict ETH.

\smallskip\noindent\textsl{$-$ Approximability.}
We provide an Efficient Polynomial-Time Approximation Scheme (EPTAS) for \textsc{MWBS}, based on Baker's (or shifting) technique~\cite{Baker94}. Namely, using our algorithm for graphs of bounded branchwidth, we give an $(1+\epsilon)$-approximation algorithm that runs in $2^{\Oh(1/\epsilon)} \cdot n^{\Oh(1)}$ time. 

\smallskip
Full proofs of the results marked with an asterisk (*), as well as additional definitions and technical details, are given in appendix.
\section{ Definitions and Terminology}\label{se:basic}
Let $G$ be a digraph. We denote by $V(G)$ and $E(G)$ the set of vertices and the set of edges of $G$.
Throughout the paper we assume that $G$ is planar and that it comes with a planar embedding; such an embedding fixes, for each vertex $v \in V(G)$, the clockwise order of the edges incident to $v$.
We say that $G$ is a \emph{planar embedded digraph} or simply that $G$ is a \emph{plane digraph}. 

\medskip\noindent{\bf Branch decomposition and sphere-cut decomposition.}
A \emph{branch decomposition} of a graph $G$ defines a hierarchical clustering of the edges of $G$, represented by an unrooted proper binary tree, that is a tree with non-leaf nodes of degree three, whose leaves are in one-to-one correspondence with the edges of $G$. More precisely, a branch decomposition of $G$ consists of a pair $\langle T,\xi \rangle$, where $T$ is an unrooted proper binary tree and $\xi : \mathcal L(T) \leftrightarrow E(G)$ is a bijection between the set $\mathcal L(T)$ of the leaves of~$T$ and the set $E(G)$ of the edges of $G$. 
For each arc $a$ of $T$, denote by $T^a_1$ and $T^a_2$ the two connected components of $T \setminus \{a\}$, and, for $i=1,2$, let $G^a_i$ be the subgraph of $G$ that consists of the edges corresponding to the leaves of $T^a_i$. 
The \emph{middle set} $\midset(a) \subseteq V (G)$ is 
the intersection of the vertex sets of $G^a_1$ and $G^a_2$, i.e., $\midset(a) := V(G^a_1) \cap V(G^a_2)$.
The \emph{width} $\beta(\langle T,\xi \rangle)$ of $\langle T,\xi \rangle$ is the maximum size of the middle sets over all arcs of $T$, i.e., $\beta(\langle T,\xi \rangle) = \max\{\lvert \midset(a) \rvert: a \in E(T)\}$. An \emph{optimal branch decomposition} of $G$ is a branch decomposition with minimum width; this width is called the \emph{branchwidth} of~$G$ and is denoted by~$\bw(G)$.

\begin{figure}[tb]
    \centering
    \subfigure[]{\includegraphics[page=1,width=.35\textwidth]{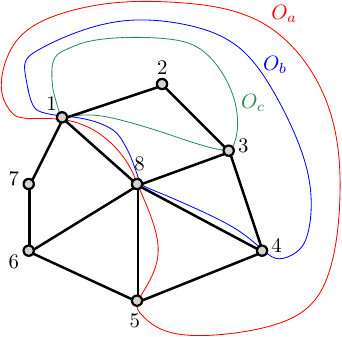}\label{fi:scd-a}}
    \hfil
    \subfigure[]{\includegraphics[page=2,width=.35\textwidth]{scd}\label{fi:scd-b}}
    \caption{A plane graph $G$ and a sphere-cut decomposition of $G$; three nooses are highlighted on $G$ for the arcs $a$, $b$, and $c$ of the decomposition tree.}
    \label{fi:scd}
\end{figure}

\smallskip
A sphere-cut decomposition is a special type of branch decomposition (see \Cref{fi:scd}). Let $G$ be a connected planar graph, topologically drawn on a sphere $\Sigma$. A \emph{noose} $O$ of $G$ is a closed simple curve on $\Sigma$ that intersects $G$ only at vertices and that traverses each face of $G$ at most once. The \emph{length} of $O$ is the number of vertices that $O$ intersects. Note that, $O$ bounds two closed discs $\Delta_O^1$ and $\Delta_O^2$ in $\Sigma$; we have $\Delta_O^1 \cap \Delta_O^2 = O$ and $\Delta_O^1 \cup \Delta_O^2 = \Sigma$. Let $\langle T,\xi \rangle$ be a branch decomposition of $G$. Suppose that for each arc $a$ of $T$ there exists a noose $O_a$ that traverses exactly the vertices of $\midset(a)$ and whose closed discs $\Delta_{O_a}^1$ and $\Delta_{O_a}^2$ enclose the drawings of $G_1^a$ and of $G_2^a$, respectively. Denote by $\pi_a$ the circular clockwise order of the vertices in $\midset(a)$ along $O_a$ and let $\Pi=\{\pi_a : a \in E(T)\}$ the set of all circular orders $\pi_a$. The triple $\langle T, \xi, \Pi \rangle$ is a \emph{sphere-cut decomposition} of $G$. We 
assume that the vertices of $\midset(a)=V(G_1^a) \cap V(G_2^a)$ are enumerated according to $\pi_a$. 
Since a noose $O_a$ traverses each face of $G$ at most once, both graphs $G_1^a$ and $G_2^a$ are connected. Also, the nooses are pairwise non-crossing, i.e., for any pair of nooses $O_a$ and $O_b$, we have that $O_b$ lies entirely inside $\Delta_{O_a}^1$ or entirely inside $\Delta_{O_a}^2$. 
For a noose $O_a$, we define $\midset(O_a)=\midset(a)$, or in general, we define $\midset(\phi)$ to be the vertices cut by $\phi$.
We rely on the following result on the existence and computation of a sphere-cut decomposition~\cite{DBLP:conf/wg/JacobP22} (see also  \cite{DBLP:journals/algorithmica/DornPBF10}).

\begin{proposition}[\cite{DBLP:conf/wg/JacobP22}] \label{th:sphere-cut-with-deg-1}
    Let $G$ be a connected graph embedded in the sphere with $n$ vertices and  branchwidth $\ell \geq 2$. Then there exists a sphere-cut decomposition of $G$ with width $\ell$, and it can be computed in $\Oh(n^3)$ time.
\end{proposition}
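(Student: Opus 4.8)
The plan is to reduce the proposition to two classical ingredients and then to confront the single genuinely new difficulty flagged by the label. First I would \emph{compute an optimal branch decomposition} $\langle T,\xi\rangle$ of $G$ of width $\ell=\bw(G)$. For planar graphs this is possible in polynomial time: the ratcatcher of Seymour and Thomas computes $\bw(G)$, and an accompanying construction (later sharpened by Gu and Tamaki) outputs an optimal branch decomposition in $\Oh(n^3)$ time. This already matches the target running time, so the remaining task is purely structural: transform $\langle T,\xi\rangle$ into a sphere-cut decomposition of the \emph{same} width.

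Second, I would realize each middle set by a noose using the \emph{radial graph} $R_G$, the bipartite plane graph whose nodes are the vertices and faces of $G$ and whose edges join a vertex to every incident face. Nooses of $G$ correspond exactly to cycles of $R_G$ that alternate between vertex-nodes and face-nodes, and the length of the noose equals the number of vertex-nodes on the cycle. Hence, to turn a separation $(G_1^a,G_2^a)$ with separator $\midset(a)$ into a noose of length $\lvert\midset(a)\rvert$, it suffices to route a closed curve through precisely the vertices of $\midset(a)$ that separates the drawings of $G_1^a$ and $G_2^a$ and visits each face at most once. The key lemma (Seymour--Thomas, made algorithmic by Dorn et al.~\cite{DBLP:journals/algorithmica/DornPBF10}) is that such a rerouting exists whenever both sides of the separation are connected, and it can be found by short walks in $R_G$.

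Third, I would enforce the two properties a noose family must satisfy, namely that the nooses are pairwise non-crossing and that each face is visited at most once (which in turn forces both sides of every separation to be connected). Crossing nooses are removed by the standard \emph{uncrossing} argument: the middle-set (border) function is submodular,
\[
\lvert\midset(O)\rvert+\lvert\midset(O')\rvert\ \ge\ \lvert\midset(O\sqcap O')\rvert+\lvert\midset(O\sqcup O')\rvert,
\]
so replacing a crossing pair by its inner and outer resolutions never increases the width; iterating yields a laminar family of nooses that can be threaded back onto the tree $T$ without disturbing its binary structure. Connectivity of the two sides is obtained in tandem, since a noose that visits every face at most once cannot cut a connected piece apart.

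The step I expect to be the main obstacle---and the reason the statement is attributed to the refinement of~\cite{DBLP:conf/wg/JacobP22} rather than to the original result of Dorn et al.~\cite{DBLP:journals/algorithmica/DornPBF10}---is the treatment of \emph{bridges and degree-$1$ vertices}, as the label suggests. A bridge $e=uv$ is incident to the same face on both of its sides, so any closed curve that separates the $G_1^a$-side from the $G_2^a$-side across $e$ would traverse that face twice, violating the definition of a noose. I would handle this by a local modification around each bridge (conceptually, an infinitesimal perturbation splitting the offending face, or equivalently a temporary augmentation of $G$ so that every edge lies on two distinct faces), carry out the noose construction on the modified instance, and then argue that undoing the modification preserves both the width $\ell$ and the noose property for every arc. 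Verifying that this bridge handling is consistent across all arcs of $T$ simultaneously, that it does not inflate any middle set beyond $\ell$, and that the whole procedure stays within $\Oh(n^3)$ time, is the delicate part of the argument.
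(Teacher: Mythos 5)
First, a point of context: the paper does not prove this statement at all --- it is imported as a black box from Jacob and Pilipczuk~\cite{DBLP:conf/wg/JacobP22} (with~\cite{DBLP:journals/algorithmica/DornPBF10} as a secondary reference), so your attempt is not being measured against an internal argument but against the cited external result. Your first three steps do follow the established route to that result: Gu--Tamaki's $\Oh(n^3)$ construction of an optimal branch decomposition, the correspondence between nooses and alternating cycles in the radial graph, and uncrossing via submodularity of the border function. You also correctly diagnose \emph{why} the paper cites Jacob--Pilipczuk rather than Dorn et al.: the treatment of bridges and degree-one vertices.

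That diagnosis, however, is also where your proposal has a genuine gap, in two senses. First, you explicitly defer the bridge-handling step (``the delicate part of the argument''); since everything before it was already available in Seymour--Thomas/Gu--Tamaki/Dorn et al., deferring this step means the proposition is not actually proved --- the deferred step \emph{is} the cited contribution. Second, the fix you sketch cannot work as described. Consider two triangles joined by a bridge $e=uv$: this graph is connected with branchwidth $2$, so the proposition applies to it. Every branch decomposition has a leaf arc separating $\{e\}$ from all other edges, with middle set $\{u,v\}$; since $e$ is incident to a single face, both arcs of any noose through $u$ and $v$ enclosing only $e$ must lie in that face, which is then traversed twice; moreover, the other side of this separation (two disjoint triangles) is disconnected. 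So under the strict noose definition used in the paper (each face traversed at most once, which forces both sides of every separation to be connected), this graph admits \emph{no} sphere-cut decomposition of any width. This shows that no ``local modification plus undo'' strategy can succeed: whatever auxiliary bridgeless graph you construct the nooses in, once the modification is undone its faces merge back into faces of $G$, and the leaf noose of the bridge again violates the face condition --- the obstruction is not an artifact of a clumsy construction but is forced by the separation itself. A correct proof must therefore either work with a relaxed notion of noose/sphere-cut decomposition (which is what makes the statement meaningful for connected graphs with bridges), or argue globally through the block structure of the graph; in addition, your ``temporary augmentation'' would need a proof that it does not increase branchwidth, since adding edges can do so, and that obligation is also left unaddressed.
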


We remark that the branchwidth $\bw(G)$ and the treewidth $\tw(G)$ of a graph $G$ are within a constant factor: $ \bw(G) - 1 \leq \tw(G) \leq \lfloor \frac{3}{2}\bw(G) \rfloor - 1$ (see  \cite{DBLP:journals/jct/RobertsonS91}).

\section{FPT Algorithms for MWBS by Branchwidth}\label{se:fpt-branchwidth}

In this section we describe an FPT algorithm parameterized by branchwidth. 
We first introduce configurations, which encode on which side of a closed curve and in what order in a bimodal subgraph for a vertex $v$ the switches between incoming to outgoing edges happen.

\begin{definition}[Configuration]
\label{def:configuration}
    Let $C=\{(i), (o), (i,o), (o,i), (o,i,o), (i,o,i)\}$.
    Let $G$ be a graph embedded in the sphere $\Sigma$, $\phi$ be a noose in $\Sigma$ with a prescribed inside, $v \in \midset{(\phi)}$, and $X \in C$.
    Let $E^{v,\phi}$ be the set of edges incident to $v$ in $\phi$. We say $v$ has \emph{configuration} $X$ in $\phi$, if $E^{v,\phi}$ can be partitioned into sets such~that:
    \begin{enumerate}
        \item For every $x\in X$, there is a (possibly empty) set $E_x$ associated with it.
        \item Every set associated with an $i$ ($o$) contains only in- (/out-) edges of $v$.
        \item For every set, the edges contained in it are successive around $v$.
        \item The sets $E_x$ appear clockwise (seen from $v$) in the same order in $G$ inside $\phi$ as the $x$ appear in $X$.
    \end{enumerate}
    For every $v \in \midset{(\phi)}$, let $X_v$ be a configuration of $v$ in $\phi$. We say $X_{\phi} = \{X_v \mid v\in \midset{(\phi)}\}$ is a \emph{configuration set} of $\phi$.
\end{definition}
If $G$ is bimodal, then for every noose $\phi$ and every vertex $v\in \midset{(\phi)}$, $v$ must have at least one configuration $X \in C$ in $\phi$.
Note that configurations and configuration sets are not unique, as seen in \Cref{fi:config}. A vertex can even have all configurations if it has no incident edges in $\phi$.
%
%
The next definition is needed to encode when configurations can be combined in order to obtain bimodal vertices.

\begin{definition}[Compatible configurations]
    Let $X, X', X^* \in C$ be configurations.
    We say $X, X'$ are \emph{compatible configurations} or short \emph{compatible}, if by concatenating $X, X'$ and deleting consecutive equal letters, the result is a substring of $(o,i,o)$ or $(i,o,i)$. Note that it is not important in which order we concatenate $X, X'$.
    See Figure \ref{fi:compatible1}.
    %
    We say $X$ and $X'$ are \emph{compatible with respect to $X^*$} if by concatenating $X, X'$ (in this order) and deleting consecutive equal letters, the result is a substring of $X^*$.
\end{definition}

A configuration $X$ can have several compatible configurations, for example $(i, o) \in C$ is compatible with $(o), (i)$ and $(o,i)$. From these $(o, i)$ is in some sense maximal, meaning that configurations $(o)$ and $(i)$ are substrings of $(o, i)$. Given a configuration $X$, a \emph{maximal compatible configuration} $X'$ of $X$ is a configuration that is compatible with $X$, and all other compatible configurations of $X$ are substrings of $X'$.
Observe that every configuration has a unique maximal compatible configuration, they are pairwise:
$(i)-(i, o, i)$,  $(o)-(o,i,o)$ and $(o,i)-(i,o)$.

We say a noose $\phi_3$ is \emph{composed} of the nooses $\phi_1$ and $\phi_2$, if the edges of $G$ in $\phi_3$ are partitioned by $\phi_1$ and $\phi_2$.
If a noose $\phi_3$ is composed of nooses $\phi_1$ and $\phi_2$, and there exists a vertex $v \in \midset(\phi_1) \cap \midset(\phi_2) \cap \midset(\phi_3)$, such that in $\phi_3$ around $v$, all adjacent edges of $v$ in $\phi_1$ are clockwise before all adjacent edges of $v$ in $\phi_2$.
If $X, X'$ and $X^*$ are nooses and $X$ and $X'$ are compatible  with respect to $X^*$, and $v$ has configuration $X$ in $\phi_1$ and configuration $X'$ in $\phi_2$, then it has configuration $X^*$ in $\phi_3$.
See Figure \ref{fi:compatible2}.


%
If a curve $\phi$ contains only one edge on its inside, finding maximal subgraphs for a configuration inside $\phi$ is easy.
\begin{figure}[tb]
    \centering
    \subfigure[]{\includegraphics[height=.33\textwidth]{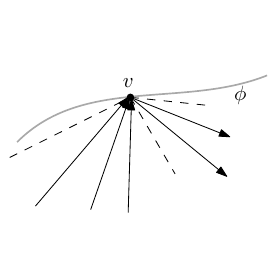}\label{fi:config}}
    \hfil
    \subfigure[]{\includegraphics[height=.33\textwidth]{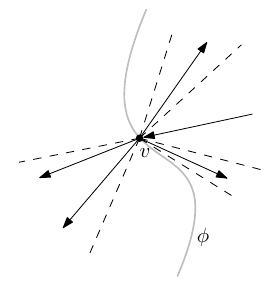}\label{fi:compatible1}}
    \hfil
    \subfigure[]{\includegraphics[height=.33\textwidth]{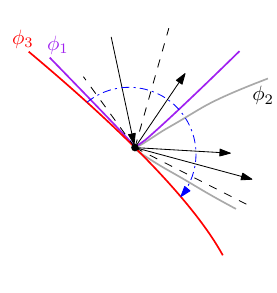}\label{fi:compatible2}}
    \caption{
    (a) A vertex with configurations $(o,i), (o,i,o)$ and $ (i,o,i)$ in $\phi$. The most restricted and thus minimal configuration is $(o,i)$.
    (b) A vertex with configuration $(o,i,o)$ in $\phi$ and $(o)$ outside of $\phi$.
    Concatenating $(o,i,o)$ with $(o)$ and deleting consecutive equal letters results in $(o,i,o)$, the result is a substring of $(o,i,o)$, thus $(o,i,o)$ and $(o)$ are compatible.
    (c) Note that $\phi_3$ is composed of $\phi_1$ and $\phi_2$;
    the inside of $\phi_1$, the inside of $\phi_2$ and the outside of $\phi_3$ are clockwise in this order around $v$ with configuration $(i, o)$ in $\phi_1$ and $(o)$ in $\phi_2$. They can be concatenated to configuration $(i, o)$ in $\phi_3$, while $(i, o)$ and $(o)$ are compatible w.r.t. $(i,o)$, but not~$(o,i)$.}
    \label{fi:compatible}
\end{figure}
\begin{restatable}[*]{lemma}{lemconfigsolutionbase}
    \label{lem:config_solution_base}
    Let $G$ be a graph embedded in the sphere $\Sigma$, let $e = \{u, v\}$ be an edge and let $\phi$ be a noose that cuts $G$ only in $u$ and $v$, such that $e$ is in $\phi$ and all other edges are on the outside of $\phi$.
    Let $X_u, X_v$ be prescribed configurations. Then we can compute in $\Oh(1)$ time the maximum subgraph $G'$ of $G$ such that $u, v$ have configuration $X_u$ respectively $X_v$ in $\phi$ in $G'$.
\end{restatable}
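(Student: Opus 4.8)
The plan is to observe that the noose $\phi$ isolates a single edge, so the entire optimization collapses to a binary decision. Concretely, since $e$ is the only edge lying inside $\phi$, for any subgraph $G'$ the edge sets $E^{u,\phi}$ and $E^{v,\phi}$ of edges incident to $u$, resp. $v$, inside $\phi$ satisfy $E^{u,\phi}, E^{v,\phi} \subseteq \{e\}$, and both equal $\{e\}$ precisely when $e \in E(G')$. Every other edge of $G$ lies outside $\phi$ and hence contributes to neither $E^{u,\phi}$ nor $E^{v,\phi}$; consequently such edges do not influence whether $u$ and $v$ realize the prescribed configurations in $\phi$, and they may be kept in $G'$ unconditionally. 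Thus the only choice that affects feasibility is whether to include $e$.

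First I would treat the case $e \notin E(G')$. Then $E^{u,\phi} = E^{v,\phi} = \emptyset$, and by \Cref{def:configuration} every $X \in C$ is realized by assigning the empty set to each of its letters (Conditions 2--4 hold vacuously). Hence any pair $(X_u, X_v)$ is feasible, contributing weight $0$ from the inside of $\phi$. Next I would treat the case $e \in E(G')$. Since $G$ is a digraph, $e$ is oriented; say it is directed from $u$ to $v$, so that $e$ is an out-edge at $u$ and an in-edge at $v$ (the reverse orientation is symmetric). Realizing $X_u$ now amounts to distributing the singleton $\{e\}$ among the letters of $X_u$: the edge $e$ must be placed in a set associated with an $o$ (Condition 2), while all other sets remain empty, so Conditions 3--4 hold automatically for a single nonempty set. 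This is possible exactly when $X_u$ contains the letter $o$, i.e. for every configuration of $C$ except $(i)$; symmetrically, $v$ realizes $X_v$ exactly when $X_v$ contains $i$, i.e. for every configuration except $(o)$. Therefore including $e$ is feasible iff $X_u$ contains the orientation letter of $e$ at $u$ and $X_v$ contains the orientation letter of $e$ at $v$.

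Comparing the two cases, including $e$ whenever it is feasible never decreases the weight, since $w(e) \in \mathbb{Q}^+$ is strictly positive; hence the maximum subgraph $G'$ is obtained by inserting $e$ iff the letter test above succeeds (and retaining all outside edges). Reading the orientation of $e$ and testing membership of a single letter in the constant-size strings $X_u, X_v$ are both $\Oh(1)$ operations, which gives the claimed running time.

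There is no substantial obstacle here, as this is the base case of the dynamic program; the only point requiring care is the bookkeeping of empty sets in \Cref{def:configuration}, which is precisely what makes a single edge compatible with every configuration containing its orientation letter. Once this is observed, correctness follows from the short case analysis above.
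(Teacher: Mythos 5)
Your proposal is correct and follows the same approach as the paper's (much terser) proof: since $e$ is the only edge inside $\phi$, the problem reduces to a constant-time test of whether keeping $e$ is compatible with the prescribed configurations $X_u, X_v$, and keeping it whenever feasible is optimal. Your case analysis (empty sets realize any configuration; a kept edge requires its orientation letter to occur in each prescribed configuration) is exactly the detail the paper leaves implicit.
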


We will now see how we can compute optimal subgraphs bottom-up.
\begin{restatable}[*]{lemma}{lemconfigsolutiondpstep}
    \label{lem:config_solution_dpstep}
    Let $G$ be a graph embedded in the sphere $\Sigma$, let $\phi_1, \phi_2, \phi_3$ be nooses with length at most $\ell$ each, and let $E_{\phi_1}, E_{\phi_2}, E_{\phi_3}$ be the sets of edges contained inside the respective noose with $E_{\phi_1}, E_{\phi_2}$ being a partition of $E_{\phi_3}$. Let $X_{\phi_3}$ be a configuration set for $\phi_3$.
    Let further for every configuration set $X_{\phi_1}$ ($ X_{\phi_2}$) of $\phi_1$ ($\phi_2$), the maximum subgraph that has configuration set $X_{\phi_1}$ ($X_{\phi_2}$) and is bimodal in $\phi_1$ ($\phi_2$) be known.
    Then a maximum subgraph $G'$ of $G$ that has configuration set $X_{\phi_3}$ and is bimodal in $\phi_3$ can be computed in $\Oh(6^{2\ell}) \cdot n^{\Oh(1)}$~time.
\end{restatable}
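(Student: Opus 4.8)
The plan is to treat this as the \emph{combining} step of a dynamic program over the sphere-cut decomposition, exploiting the fact that $E_{\phi_1},E_{\phi_2}$ partition $E_{\phi_3}$. Any subgraph $G'$ living inside $\phi_3$ then splits canonically into $G_1':=G'\cap E_{\phi_1}$ and $G_2':=G'\cap E_{\phi_2}$, and since the two edge sets are disjoint we have $w(G')=w(G_1')+w(G_2')$, so the only interaction between the two sides is the cyclic order of edges around the shared vertices. First I would record the structural facts forced by $\phi_3$ being composed of $\phi_1$ and $\phi_2$: every vertex of $\midset(\phi_3)$ lies on $\phi_1$ or on $\phi_2$; every vertex of $\midset(\phi_1)$ lies on $\phi_3$ or on the shared boundary $\midset(\phi_1)\cap\midset(\phi_2)$ (and symmetrically for $\phi_2$); and for every shared vertex $v$ the composition fixes the clockwise order so that all edges of $v$ inside $\phi_1$ precede all edges of $v$ inside $\phi_2$. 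These facts make the vertex categories below exhaustive.

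The algorithm enumerates every pair of configuration sets $(X_{\phi_1},X_{\phi_2})$ for $\phi_1$ and $\phi_2$. Since $\lvert\midset(\phi_i)\rvert\le\ell$ and $\lvert C\rvert=6$, there are at most $6^{\ell}$ choices for each, hence at most $6^{2\ell}$ pairs. For each pair I would test, vertex by vertex, whether it is \emph{admissible}, meaning that gluing $G_1'$ and $G_2'$ realizes the prescribed set $X_{\phi_3}$ while keeping bimodal every vertex interior to $\phi_3$. Using the classification above this is a constant-time test at each vertex $v$: (i) if $v\in\midset(\phi_1)\cap\midset(\phi_2)\cap\midset(\phi_3)$, then its two configurations must be compatible with respect to the prescribed target $X_v\in X_{\phi_3}$, in the order dictated by the composition; (ii) if $v\in\midset(\phi_1)\cap\midset(\phi_2)$ but $v\notin\midset(\phi_3)$, then $v$ is interior to $\phi_3$ and must become bimodal, so its two configurations must be compatible; (iii) if $v$ lies on exactly one of $\phi_1,\phi_2$ (hence on $\phi_3$), its contribution comes entirely from that side and the single configuration must be compatible with respect to $X_v$, the other side contributing no edges. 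Among all admissible pairs I return the one maximizing the sum $M_1(X_{\phi_1})+M_2(X_{\phi_2})$ of the two precomputed maxima, together with the corresponding glued subgraph.

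For correctness I would prove both inequalities. For the lower bound, take an optimal $G'$ with configuration set $X_{\phi_3}$ that is bimodal in $\phi_3$, split it into $G_1',G_2'$, and read off configuration sets that they realize; the fixed composition order together with \Cref{def:configuration} and the definition of (relative) compatibility shows that this pair is admissible, so $M_1(X_{\phi_1})+M_2(X_{\phi_2})\ge w(G_1')+w(G_2')=w(G')$. For the upper bound, given any admissible pair I glue the precomputed maximizers: interior vertices of $\phi_1$ and of $\phi_2$ stay bimodal by hypothesis, shared interior vertices become bimodal by case (ii) and the composition statement that combining compatible configurations in the fixed order yields a bimodal vertex, and the $\midset(\phi_3)$-vertices realize $X_{\phi_3}$ by cases (i) and (iii); hence the glued subgraph is a valid candidate of weight $M_1+M_2$, so $w(G')\ge M_1+M_2$. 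Taking maxima over admissible pairs gives equality, and the same pair yields the subgraph itself via standard DP bookkeeping.

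The running time is $6^{2\ell}$ pairs, each processed with an $\Oh(\ell)$ per-vertex compatibility check plus $n^{\Oh(1)}$ work to assemble and weigh the glued subgraph, for a total of $\Oh(6^{2\ell})\cdot n^{\Oh(1)}$. I expect the main obstacle to be the correctness of the per-vertex case analysis rather than the counting: one must verify that the three vertex categories are exhaustive (using the noose/composition geometry), that the clockwise order supplied by the composition is precisely the concatenation order used in the definition of ``compatible with respect to $X^*$'', and that plain compatibility of the two sides is both necessary and sufficient for a shared interior vertex to be bimodal after gluing. Handling the degenerate configurations cleanly — empty sets $E_x$, and vertices that receive edges from only one side — within this framework is the most error-prone part.
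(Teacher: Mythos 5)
Your proposal is correct and takes essentially the same route as the paper's proof: enumerate all $6^{2\ell}$ pairs of configuration sets for $\phi_1,\phi_2$, check per-vertex compatibility (plain compatibility for shared vertices interior to $\phi_3$, compatibility with respect to $X_v$ for vertices on $\phi_3$), and combine the precomputed optima, with correctness established by the same cut-and-paste/gluing exchange argument in both directions. Your explicit three-way vertex classification and admissibility test is just a more structured write-up of what the paper does implicitly.
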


If a noose $\phi$ contains only $e\in E$, we have only two options in $\phi$: delete $e$ or do not. Testing which is optimal can be done in constant time, this leads to \Cref{lem:config_solution_base}.
Now let $\phi_3$ be a noose that contains more than one edge, let $\phi_1, \phi_2$ be two nooses that partition the inside of $\phi_3$, and let $X_{\phi_3}$ be a given configuration set. If we already know optimal solutions for any given configuration set in $\phi_1$ ($\phi_2$) (which we already computed when traversing the sphere-cut decomposition bottom up), we can guess for some optimal solution for $\phi_3$ for every $v \in \midset(\phi_1) \cap \midset(\phi_2)$ the configuration it has in $\phi_1$ and in $\phi_2$. This gives us configuration sets $X_{\phi_1}$ and $X_{\phi_2}$ for $\phi_1$ and $\phi_2$, respectively (for every $v\in \midset(\phi_1)\setminus \midset(\phi_2)$ we take its configuration in $X_{\phi_3}$). We obtain the corresponding solution $G'$ that coincides with the optimal solution for $\phi_1$ ($\phi_2$) in $\phi_1$ ($\phi_2$) respecting $X_{\phi_1}$ ($X_{\phi_2}$) and that coincides with $G$ outside of $\phi_3$.
Since $|\midset(\phi_1) \cap \midset(\phi_2)| \leq \ell$, we achieve the same by enumerating all possible configurations for $\midset(\phi_1) \cap \midset(\phi_2)$, compute the corresponding solutions and take the maximum in $\Oh(6^{2\ell})\cdot n^{\Oh(1)}$ time, leading to \Cref{lem:config_solution_dpstep}.
%
We now obtain the following theorem.

\begin{restatable}[*]{theorem}{thmmwbsfptbw} \label{thm:mwbs-fpt-bw}
    There is an algorithm that solves \textsc{MWBS$(G, w)$} in $2^{\Oh(\bw(G))} \cdot n^{\Oh(1)}$ time. In particular, \textsc{MWBS} is FPT when parameterized by~branchwidth.
\end{restatable}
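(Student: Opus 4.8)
The plan is to assemble the final theorem by combining \Cref{th:sphere-cut-with-deg-1} with the dynamic program whose correctness is captured by \Cref{lem:config_solution_base} and \Cref{lem:config_solution_dpstep}. First I would handle the reduction to the connected case: if $G$ is disconnected, bimodality is decided independently on each connected component, so I would argue that an optimal bimodal subgraph is obtained by solving \textsc{MWBS} separately on each component and taking the union; this lets me apply the sphere-cut machinery, which requires connectivity. Assuming $\bw(G) = \ell \geq 2$ (the cases $\ell \leq 1$ being trivial since the graph is then a forest-like structure that is automatically bimodal), I would invoke \Cref{th:sphere-cut-with-deg-1} to compute a sphere-cut decomposition $\langle T, \xi, \Pi\rangle$ of width $\ell$ in $\Oh(n^3)$ time.

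Next I would root the decomposition tree $T$ at an arbitrary arc and describe the bottom-up dynamic program over $T$. For each arc $a$ the associated noose $O_a$ cuts $G$ into $G_1^a$ (inside) and $G_2^a$, and the DP table at $a$ stores, for every configuration set $X_{O_a}$ of the at most $\ell$ vertices in $\midset(a)$, the maximum weight of a subgraph of $G_1^a$ that is bimodal inside $O_a$ and realizes $X_{O_a}$. Since each of the at most $\ell$ middle-set vertices ranges over the six configurations in $C$, there are at most $6^\ell$ table entries per arc. At the leaves, where the noose encloses a single edge, the entries are filled using \Cref{lem:config_solution_base} in $\Oh(1)$ time each. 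At an internal node, whose noose $\phi_3$ is composed of the two child nooses $\phi_1, \phi_2$, the entries are computed from the children's tables using \Cref{lem:config_solution_dpstep} in $\Oh(6^{2\ell}) \cdot n^{\Oh(1)}$ time. I would then read the optimum off the root, observing that at the root the middle set is empty (or can be made trivial), so the single surviving entry is the weight of a maximum bimodal subgraph of all of $G$, and the subgraph itself can be recovered by standard backtracking.

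For the running-time accounting I would note that $T$ has $\Oh(n)$ arcs, and the work at each arc is dominated by \Cref{lem:config_solution_dpstep}, giving a total of $\Oh(n) \cdot 6^{2\ell} \cdot n^{\Oh(1)} = 2^{\Oh(\ell)} \cdot n^{\Oh(1)}$. Combined with the $\Oh(n^3)$ cost of computing the decomposition and the reduction to connected components, the overall bound is $2^{\Oh(\bw(G))} \cdot n^{\Oh(1)}$, which is exactly the claimed FPT bound since the exponential dependence is confined to the parameter $\bw(G)$.

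The main obstacle I expect is establishing the correctness of the composition step rather than the bookkeeping: I must verify that the configuration framework faithfully captures how the cyclic order of in- and out-edges around a middle-set vertex behaves under gluing two nooses, so that a vertex realizing compatible configurations $X$ in $\phi_1$ and $X'$ in $\phi_2$ yields exactly the configuration $X^*$ in $\phi_3$ predicted by the compatibility relation. In particular I would need the planarity property that the child nooses partition the edges around each shared vertex into two clockwise-consecutive blocks (the defining feature of a sphere-cut decomposition that its nooses are non-crossing and traverse each face at most once), which is what makes the concatenation-and-deletion rule sound. Granting \Cref{lem:config_solution_dpstep}, this correctness is already delegated, so the theorem itself reduces to the structural induction over $T$ and the complexity summation sketched above.
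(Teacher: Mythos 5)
Your overall route is the same as the paper's: compute a sphere-cut decomposition via \Cref{th:sphere-cut-with-deg-1}, define per-arc tables indexed by configuration sets of the at most $\ell$ middle-set vertices, fill leaf arcs with \Cref{lem:config_solution_base} and internal arcs with \Cref{lem:config_solution_dpstep}, and bound the total time by $2^{\Oh(\ell)}\cdot n^{\Oh(1)}$. However, two of your steps are wrong as stated. First, the base case $\bw(G)\leq 1$ is not trivial: a connected graph of branchwidth $1$ is a star, and a plane directed star is \emph{not} automatically bimodal --- if the edges around the center alternate in/out/in/out, the center is non-bimodal and edges must be deleted. Bimodality is a property of the cyclic order of edge orientations, not of sparsity, so ``forest-like'' structure buys you nothing. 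The paper handles this case by brute-forcing the at most two switch positions around the star center in quadratic time; some such argument is required, since \Cref{th:sphere-cut-with-deg-1} only applies when $\ell\geq 2$.

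Second, your extraction of the answer at the root does not work as described. You root $T$ at an arbitrary arc $a$ and claim that ``at the root the middle set is empty.'' That is false: for a connected graph with both $G_1^a$ and $G_2^a$ nonempty, the two sides share at least one vertex, so $\midset(a)\neq\emptyset$, and there is no ``single surviving entry.'' The tables of the two sides must be joined by maximizing over all pairs of configuration sets that are vertex-wise compatible --- equivalently, by one extra application of \Cref{lem:config_solution_dpstep} in which $\phi_3$ is a trivial noose (empty middle set) enclosing all of $G$ and $\phi_1,\phi_2$ are the two sides of $O_a$. The paper instead roots $T$ at a \emph{leaf} $r$ with associated edge $e_r$, defines the inside of every noose as the disc avoiding $e_r$, and finishes with a two-case analysis: either $e_r$ is deleted (take the minimum-cost entry over all configuration sets of the last noose) or $e_r$ is kept (take the entry with configurations $(i,o,i)$ and $(o,i,o)$ at its head and tail, which subsume all configurations compatible with the extra edge). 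Either fix is achievable with the machinery you already cite, but as written the final step of your argument is incorrect, not merely unpolished.
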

\begin{proof}[Sketch]
Assume that $G$ is connected (otherwise process every connected component independently). If $\bw(G) = 1$, $G$ is a star and we can compute an optimal solution in polynomial time. Otherwise, according to \Cref{th:sphere-cut-with-deg-1} we can compute a sphere-cut decomposition $\langle T, \xi, \Pi \rangle$ for $G$ with optimal width $\ell$. We  pick any leaf of $T$ to be the root $r$ of $T$. For every noose $O$ corresponding to an arc of $T$ let $X_O$ be a configuration set for $O$. Then we define $E_{(O, X_O)}$ to be edge set of minimum weight, such that $G \setminus E_{(O, X_O)}$ is bimodal inside of $O$ and has configuration set $X_O$ in $O$.
We now compute the $E_{(O, X_O)}$ bottom-up.
For a noose $O$ corresponding to a leaf-arc in $T$, \Cref{lem:config_solution_base} shows that we can compute all possible values of $E_{(O, X_O)}$ in linear time.
For a noose $O$ corresponding to a non-leaf arc in $T$, \Cref{lem:config_solution_dpstep} shows that we can compute $E_{O, X_O}$ for a given $X_O$ in $\Oh(6^{2\ell}) \cdot n^{\Oh(1)}$ time, and thus all entries for $O$ in $\Oh(6^{3\ell}) \cdot n^{\Oh(1)}$ time.
Let $e\in E$ be the edge associated with $r$. We have only two options left, delete $e$ or do not. In both cases we obtain the optimal solution for the rest of $G$ from the values $E_{(O, X_O)}$.
The overall running time is $2^{\Oh(\ell)} \cdot n^{\Oh(1)}$.
\end{proof}


Since our input graphs are planar, we immediately obtain a subexponential algorithm for MWBS because 
for a planar graph $G$, $\bw(G) = \Oh(\sqrt{n})$\cite{FominT06}.

\begin{theorem} \label{thm:mwbs--bw}
    \textsc{MWBS$(G=(V,E), w)$} can be solved in $2^{\Oh(\sqrt{n})}$ time. 
\end{theorem}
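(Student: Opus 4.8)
The plan is to combine the branchwidth FPT algorithm from \Cref{thm:mwbs-fpt-bw} with the classical planar branchwidth bound. Since the input digraph $G$ is planar, its underlying undirected graph is also planar, and a fundamental result on planar graphs states that $\bw(G) = \Oh(\sqrt{n})$ where $n = |V(G)|$ (see \cite{FominT06}). The idea is simply to feed this bound into the running time of the branchwidth algorithm.

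First I would apply \Cref{thm:mwbs-fpt-bw}, which solves \textsc{MWBS$(G,w)$} in time $2^{\Oh(\bw(G))} \cdot n^{\Oh(1)}$. Substituting the planar branchwidth bound $\bw(G) = \Oh(\sqrt{n})$ into the exponent yields a running time of $2^{\Oh(\sqrt{n})} \cdot n^{\Oh(1)}$. The polynomial factor $n^{\Oh(1)}$ is dominated by $2^{\Oh(\sqrt{n})}$, so the overall running time simplifies to $2^{\Oh(\sqrt{n})}$, as claimed.

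One technical point to verify is that the algorithm of \Cref{thm:mwbs-fpt-bw} does not need to know $\bw(G)$ in advance: its subroutine (via \Cref{th:sphere-cut-with-deg-1}) computes an optimal sphere-cut decomposition in $\Oh(n^3)$ time, so the width $\ell = \bw(G)$ used in the dynamic program is automatically the optimal one. Thus the exponential dependence is on the \emph{actual} branchwidth, which the planar bound controls. I would also note that the bound $\bw(G) = \Oh(\sqrt{n})$ applies to connected planar graphs, but the connectivity reduction is already handled inside \Cref{thm:mwbs-fpt-bw} by processing each connected component independently, and each component has at most $n$ vertices.

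I do not anticipate any real obstacle here, as this is a direct pipelining of two established results. The only care needed is to confirm that the planar branchwidth bound of \cite{FominT06} applies to the underlying undirected planar graph of $G$ (branchwidth being an undirected notion) and that the embedding required by the sphere-cut decomposition is exactly the given planar embedding of $G$; both are immediate from the setup in \Cref{se:basic}. The statement then follows.
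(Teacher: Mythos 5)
Your proposal is correct and matches the paper's own argument exactly: the paper also derives \Cref{thm:mwbs--bw} by pipelining \Cref{thm:mwbs-fpt-bw} with the planar branchwidth bound $\bw(G) = \Oh(\sqrt{n})$ from \cite{FominT06}. Your additional remarks on connectivity and on the decomposition width being automatically optimal are sound but not needed beyond what the paper states.
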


\section{Compression for MWBS by $b$}\label{se:fpt-b}
Throughout this section we assume that (i) the weights are rational, that is, for $(G,w)$, $w\colon V(G)\rightarrow \mathbb{Q}^+$ and (ii) we consider the decision version of MWBS, that is, additionally to $(G,w)$, we are given a target value $W\in\mathbb{Q}^+$ and the task is to decide whether $G$ has a bimodal subgraph $G^*$ with $w(E(G^*))\geq W$.

\medskip\noindent{\bf Further definitions.} 
For simplicity, we say that a bimodal vertex of $G$ is a \emph{good} vertex, and that a non-bimodal vertex is a \emph{bad} vertex. We denote by  
${\cal G}(G)$ and ${\cal B}(G)$ the sets of good and bad vertices of $G$, respectively.
Given a vertex $v \in V(G)$, an \emph{in-wedge} (resp. \emph{out-wedge}) of $v$ is a maximal circular sequence of consecutive incoming (resp. outgoing) edges of $v$. 
Clearly, if $v$ is bimodal it has at most one in-wedge and at most one out-wedge.   
%
	Given a vertex $v \in {\cal B}(G)$, a \emph{good edge-section} of $v$ is a maximal consecutive sequence of in- and out- wedges of $v$, such that no edge is incident to another bad vertex.

\begin{observation}\label{obs:bounded-sections}
	Let $(G, w)$ be an instance of MWBS with $b$ bad vertices, and let $v \in {\cal B}(G)$. Then $v$ can have at most $b-1$ good edge-sections.
\end{observation}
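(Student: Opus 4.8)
The plan is to show that the good edge-sections of a bad vertex $v$ are naturally separated from one another by the ``bad'' edges incident to $v$, and to count these separators. First I would recall the definitions: an edge-section of $v$ is a maximal consecutive (circular) run of in- and out-wedges of $v$ whose edges all avoid other bad vertices; an edge is called \emph{bad} if its other endpoint lies in ${\cal B}(G)$. The key observation is that two distinct good edge-sections of $v$ cannot be circularly adjacent — if they were, their union would be a single longer consecutive sequence of wedges whose edges all avoid bad vertices, contradicting the maximality in the definition of a good edge-section. Hence between any two consecutive good edge-sections around $v$ there must be at least one edge whose other endpoint is a bad vertex.

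Next I would set up the counting argument. Consider the circular order of edges around $v$. Walking around $v$ clockwise, the good edge-sections appear as disjoint arcs of this circular order, and by the adjacency argument above, consecutive good edge-sections are separated by at least one bad edge (an edge incident to some vertex of ${\cal B}(G) \setminus \{v\}$). Since the good edge-sections are arranged cyclically, if there are $s$ good edge-sections then there must be at least $s$ bad edges serving as separators (one in each of the $s$ gaps between consecutive sections in the cyclic arrangement). Therefore $s \le (\text{number of bad edges incident to } v)$.

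The final step is to bound the number of bad edges incident to $v$. Each such edge has its other endpoint in ${\cal B}(G) \setminus \{v\}$, a set of size $b - 1$. Here I would either invoke a prior simplification — for instance, if a reduction rule ensures the graph is simple so that $v$ has at most one edge to each other vertex, giving at most $b-1$ bad edges directly — or argue more carefully using the planar embedding that distinct separating gaps require distinct bad \emph{neighbors}. I expect the main obstacle to be precisely this last point: in a multigraph $v$ could have many parallel bad edges all going to the same bad vertex, which would inflate the count of bad edges without increasing the number of distinct bad neighbors, so the bound $s \le b-1$ really hinges on charging each separating gap to a distinct bad vertex rather than merely to a distinct edge. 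I would resolve this by showing that the $s$ separators can be chosen to land at edges going to $s$ distinct bad vertices (using planarity/consecutiveness of parallel edges in the embedding, so parallel edges to the same bad vertex occupy a single contiguous block and thus separate the same pair of sections), yielding the claimed bound of $b-1$.
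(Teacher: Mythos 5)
The paper states this observation without proof (it is presented as self-evident), so there is no official argument to compare against; your skeleton---good edge-sections are separated, in the cyclic order around $v$, by edges whose other endpoint is bad, so the number of sections is bounded by the number of such separators---is certainly the intended one, and your first two steps are sound. The genuine gap is in your final step. The claim that planarity forces parallel (or anti-parallel) edges between $v$ and the same bad vertex $u$ to occupy a single contiguous block around $v$ is false: two such edges can be embedded as a closed curve (a digon) with some neighbors of $v$ inside the enclosed region and others outside. Concretely, let $v$ and $u$ be joined by the arcs $(v,u)$ and $(u,v)$ drawn as a closed curve, give $v$ alternating in/out edges to good vertices both inside and outside this curve, and give $u$ enough alternating edges of its own to make it non-bimodal as well. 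Then $b=2$, yet $v$ has two good edge-sections (one inside the digon, one outside), violating the claimed bound $b-1=1$. So no planarity argument can rescue the charging: in the presence of two edges between $v$ and a single bad vertex the statement itself fails, and is not merely harder to prove.

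What this means is that the observation tacitly assumes $G$ is simple in the strong sense that $v$ has at most one edge to each other bad vertex (no parallel edges and no digons between bad vertices); the paper's reduction rules never create such configurations, but they never remove them either, so this is an assumption on the input model rather than something you can derive. Under that assumption your charging closes immediately and correctly: for $s\ge 2$ each of the $s$ gaps between consecutive sections contains at least one bad edge, distinct gaps contain disjoint edge sets, and distinct bad edges go to distinct bad vertices, giving $s\le \lvert {\cal B}(G)\setminus\{v\}\rvert = b-1$. You should also flag the degenerate case your counting misses: if $v$ has no edge to another bad vertex at all, its entire rotation forms a single good edge-section, so $s=1$, which exceeds $b-1$ when $b=1$. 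This is harmless for the paper, which only uses the observation to bound the total number of sections by $O(b^2)$, but it shows the stated bound really needs $b\ge 2$ or a convention excluding the full-circle section.
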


We introduce a generalization of MWBS called
\textsc{Cut-MWBS$(G,w,\mathcal{E})$} (\emph{maximum weighted bimodal subgraph with prescribed cuts}). Given a plane digraph $G$, an edge-weight function $w: E(G) \rightarrow \mathbb{Q}^+$, and  a partition $\mathcal{E}$ of $E(G)$, compute a bimodal subgraph $G'$ of $G$ of maximum weight, i.e., whose sum of the edge weights is maximum over all bimodal subgraphs of $G$, under the condition that for every set $E_i \in \mathcal{E}$, either all $e \in E_i$ are still present in $G'$ or none of them are.
We can see that every instance $(G,w)$ of MWBS is equivalent to the instance $(G,w,\{\{e\}\mid e \in E(G)\})$ of Cut-MWBS, and thus Cut-MWBS is NP-hard.  Also, the decision variant of the problem is NP-complete. 


We now give reduction rules for the MWBS to Cut-MWBS compression, and prove that each of them is \emph{sound}, i.e., it can be performed in polynomial time and the reduced instance is solvable if and only if the starting instance is~solvable. 

\begin{redrule} \label{redrule:isolated-comp}
    Let $(G, w)$ be an instance of MWBS,  and $v \in V(G)$ be an isolated vertex. Then, let $(G', w)$ be the new instance, where $V(G') = V(G) \setminus \{v\}$.
\end{redrule}
\begin{redrule} \label{redrule:bimodal-comp}
    Let $(G, w)$ be an instance of MWBS with the target value $W$, and $u, v \in {\cal G}(G)$ be such that $(u, v)$ is an edge. Then, the resulting instance is $(G', w)$, where $G' = G - (u, v)$, and the new target value is $W'=W-w(u,v)$.
\end{redrule}
\begin{redrule} \label{redrule:copies-comp}
    Let $(G, w)$ be an instance of MWBS and $v \in {\cal G}(G)$ of degree $\geq 2$. Let $(G', w)$ be the new instance, where in $G'$ we replace each edge $e = (u, v)$ (resp.~$e = (v, u)$) where $u \in {\cal G}(G)$ with another edge $e' = (u, x_{uv})$ (resp.~$e' = (x_{uv}, u)$), where $x_{uv}$'s are distinct vertices created for each such edge, and each $e'$ is embedded within the embedding of $e$, where $w(e')=w(e)$ (see \Cref{fi:red3}). 
\end{redrule}
\begin{figure}[tb]
    \centering
    \subfigure[]{\includegraphics[page=1,width=.25\textwidth]{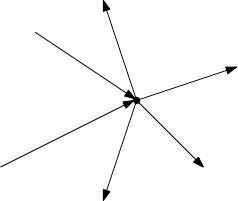}\label{fi:red3-a}}
    \hfil
    \subfigure[]{\includegraphics[page=2,width=.25\textwidth]{red3}\label{fi:red3-b}}
    \caption{A bimodal vertex (a) before and (b) after Reduction rule \ref{redrule:copies-comp} is applied.}
    \label{fi:red3}
\end{figure}

\begin{restatable}[*]{cl}{cleasysoundnesscomp} \label{cl:easy-soundness-comp}
    Reduction rules \ref{redrule:isolated-comp}, \ref{redrule:bimodal-comp} and \ref{redrule:copies-comp} are sound.
\end{restatable}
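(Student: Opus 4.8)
The plan is to isolate a single structural fact and then dispatch the three rules with it; polynomial running time is immediate in every case (each rule only deletes a vertex or an edge, or performs a local edge-for-edge replacement), so the work lies entirely in the solvability equivalence.

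First I would record the monotonicity observation that a \emph{good} vertex stays good under edge deletions. Reading the cyclic rotation at a vertex $v$ as a circular word over $\{\text{in},\text{out}\}$, $v$ is bimodal exactly when this word has at most two \emph{switches} between the two letters; deleting an incident edge merges its two cyclic neighbours and can only decrease the number of switches (with the all-in, all-out and empty rotations as easy boundary cases). Hence if $v \in {\cal G}(G)$ then $v$ is bimodal in \emph{every} subgraph of $G$, so a good vertex never constrains which of its incident edges a solution keeps. The dual statement I will also use is that if $u,v \in {\cal G}(G)$, then adding to any bimodal subgraph $H \subseteq G$ an edge of $G$ incident only to good vertices leaves it bimodal, since $u,v$ are already bimodal with all their edges present and no other vertex changes its incident set.

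With this in hand, \Cref{redrule:isolated-comp} is immediate: an isolated vertex is vacuously good and touches no edge, so deleting it alters neither $E(G)$ nor the bimodality of any other vertex, leaving the set of bimodal subgraphs and the target $W$ unchanged. For \Cref{redrule:bimodal-comp} I would first argue, using the dual statement, that some optimal bimodal subgraph of $G$ contains the good--good edge $(u,v)$: adding $(u,v)$ to any bimodal subgraph keeps it bimodal and does not decrease the weight. I then set up the weight-shifting bijection ``delete/add $(u,v)$'' between bimodal subgraphs of $G$ containing $(u,v)$ and bimodal subgraphs of $G'=G-(u,v)$; monotonicity makes both directions preserve bimodality, giving $\mathrm{OPT}(G)=\mathrm{OPT}(G')+w(u,v)$ and hence equivalence with the shifted target $W'=W-w(u,v)$.

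For \Cref{redrule:copies-comp} I would exhibit an explicit weight-preserving bijection between $E(G)$ and $E(G')$, mapping each replaced good--good edge $e=(u,v)$ to its pendant copy $e'=(u,x_{uv})$ and fixing every other edge, and then check that it carries bimodal subgraphs to bimodal subgraphs. Vertices untouched by the rule keep an identical incident-edge pattern; each fresh $x_{uv}$ has degree at most one and is trivially good; each good neighbour $u$ retains the same cyclic in/out sequence because $e'$ reuses the slot and orientation of $e$, so $u$ stays good; and $v$ only loses edges, so it stays good by monotonicity. Since every edited vertex is good in both instances, a subgraph is bimodal iff its image is, so $\mathrm{OPT}(G)=\mathrm{OPT}(G')$ with $W$ unchanged. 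The main obstacle I anticipate is precisely this embedding bookkeeping in \Cref{redrule:copies-comp}: one must verify that embedding $e'$ ``within the embedding of $e$'' genuinely preserves the cyclic position and orientation at $u$, so that $u$'s switch pattern --- and thus its goodness --- is truly unaffected; together with getting the boundary cases of the monotonicity observation right, this is where the care is needed, while everything else is routine.
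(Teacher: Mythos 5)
Your proof is correct and takes essentially the same route as the paper's: \Cref{redrule:isolated-comp} is dispatched as trivial, \Cref{redrule:bimodal-comp} by the delete/add-edge correspondence with the shifted target $W'=W-w(u,v)$, and \Cref{redrule:copies-comp} by the weight-preserving edge-swap correspondence in which each new vertex $x_{uv}$ has degree at most one and is therefore trivially bimodal. The only difference is presentational: you isolate as an explicit lemma the monotonicity fact that a bimodal vertex of $G$ remains bimodal in every subgraph (via the switch-counting argument), which the paper uses implicitly (e.g., ``$G^*$ is bimodal because $u$ and $v$ are bimodal in $G$'').
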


By applying Reductions \ref{redrule:isolated-comp}, \ref{redrule:bimodal-comp} and \ref{redrule:copies-comp} exhaustively, we get \cref{lem:reduction-to-simple-instance}, which is already enough to give a subexponential FPT algorithm by $b$ (\cref{thm:mwbs-fpt-b}). 

\begin{restatable}[*]{lemma}{lemreductiontosimpleinstance}\label{lem:reduction-to-simple-instance}
	Given an instance $(G, w)$ of MWBS, there exists a polynomial-time algorithm to obtain an equivalent instance $(G', w)$ with $G'$ being a subgraph of $G$, such that
	(i) $|{\cal B}(G')| \leq |{\cal B}(G)|$, 
	(ii) ${\cal G}(G')$ is an independent set in $G'$,
	and 
	(iii) for all $v \in {\cal G}(G')$, $\deg(v) = 1$ in the underlying graph of $G'$.
\end{restatable}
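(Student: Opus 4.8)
The plan is to apply Reduction Rules \ref{redrule:isolated-comp}, \ref{redrule:bimodal-comp}, and \ref{redrule:copies-comp} exhaustively, in this priority order (so that \Cref{redrule:copies-comp} is fired only when neither of the first two is applicable), and then read off (i)--(iii) from the fact that the output $(G',w)$ is a fixpoint. Equivalence of $(G',w)$ and $(G,w)$ as decision instances (with the target value adjusted by \Cref{redrule:bimodal-comp}), together with the polynomial running time of a \emph{single} application of each rule, is already delivered by \Cref{cl:easy-soundness-comp}. Hence the two remaining tasks are: to show that the process halts after polynomially many applications, and that no rule is applicable to $G'$ exactly when conditions (i)--(iii) hold.

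First I would record the invariant that no rule ever increases the number of bad vertices, which gives (i). Indeed, \Cref{redrule:isolated-comp} deletes only isolated, hence good, vertices; \Cref{redrule:bimodal-comp} deletes an edge with both endpoints good, and since removing an edge from a bimodal vertex keeps it bimodal, both endpoints remain good and no bad vertex is touched; and \Cref{redrule:copies-comp} merely relocates the $v$-endpoint of each edge at $v$ to a fresh degree-one (hence good) copy, while the requirement that $e'$ be embedded within the embedding of $e$ leaves the cyclic in/out pattern at the other endpoint unchanged, preserving its bimodality status. In fact the set ${\cal B}$ is left unchanged throughout, so $|{\cal B}(G')| = |{\cal B}(G)|$ and (i) follows.

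Next I would obtain (ii) and (iii) directly from exhaustiveness, the delicate point being that later applications must not destroy earlier-achieved structure. Once \Cref{redrule:bimodal-comp} cannot fire there is no edge with both endpoints good, which is (ii); once \Cref{redrule:copies-comp} cannot fire every good vertex has degree at most one, and since \Cref{redrule:isolated-comp} forbids degree zero, every good vertex has degree exactly one, which is (iii). Stability holds because of the priority order: when \Cref{redrule:copies-comp} fires, ${\cal G}$ is already independent, so every neighbour of the chosen good vertex is bad; thus each new degree-one copy is attached to a bad vertex, creating neither a good--good adjacency nor a new good vertex of degree at least two. Symmetrically, the deletions of the first two rules cannot raise the degree of any good vertex. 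Therefore the properties, once attained, persist, and the common fixpoint realizes (i)--(iii) simultaneously.

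For termination and the size bound I would use the potential equal to the number of good vertices of degree at least two: \Cref{redrule:copies-comp} strictly decreases it (it dissolves one such vertex and, by the previous paragraph, creates none), while the other two rules do not increase it. Each application of \Cref{redrule:copies-comp} introduces at most $\deg(v)$ new vertices, and the total degree is $\Oh(|E(G)|)=\Oh(|V(G)|)$ for a planar graph, so $G'$ has $\Oh(|V(G)|)$ vertices and every rule is applied only polynomially often, yielding a polynomial-time algorithm. Finally, each edge of $G'$ is an edge of $G$ with at most one endpoint relocated to a fresh degree-one copy, so $G'$ is a subgraph of $G$ up to these copies, as claimed. The main obstacle I anticipate is precisely this non-interference (stability) argument rather than any rule in isolation: one must certify that, under the chosen priority order, the three rules do not undo one another, so that their single common fixpoint genuinely satisfies all three conditions at once.
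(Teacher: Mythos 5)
Your proposal is correct and takes essentially the same route as the paper's proof: apply \Cref{redrule:isolated-comp,redrule:bimodal-comp,redrule:copies-comp} exhaustively, get (i) from the fact that no rule creates bad vertices, get (ii) and (iii) from non-applicability of the rules at the fixpoint, and bound the total number of rule applications polynomially. The only (inessential) difference is bookkeeping: the paper dispenses with your priority order and stability argument, since exhaustiveness alone yields the fixpoint properties, and it proves termination by directly counting applications of each rule (at most $|V(G)|+2|E(G)|$, $|E(G)|$, and $|V(G)|$ times, respectively) instead of your potential-function argument.
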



\begin{theorem} \label{thm:mwbs-fpt-b}
	There exists an algorithm that solves MWBS$(G, w)$ with $b$ bad vertices in $2^{\Oh(\sqrt{b})} \cdot n^{\Oh(1)}$ time. 
\end{theorem}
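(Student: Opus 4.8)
The plan is to obtain the result essentially as a corollary of \Cref{lem:reduction-to-simple-instance} and \Cref{thm:mwbs-fpt-bw}, where the glue between the two is the observation that the simplified instance has branchwidth only $\Oh(\sqrt{b})$. First I would apply \Cref{lem:reduction-to-simple-instance} to replace $(G,w)$ in polynomial time by an equivalent instance $(G',w)$ (adjusting the target value as prescribed by the reduction rules) such that $|\mathcal{B}(G')| \le |\mathcal{B}(G)| = b$, the good vertices $\mathcal{G}(G')$ form an independent set, and every good vertex has degree exactly $1$. Since $G'$ is a subgraph of $G$, it is planar and has at most $n$ vertices, so it is a legitimate input to the algorithm of \Cref{thm:mwbs-fpt-bw}; the key point is to argue that its branchwidth is small.

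The crux is the bound $\bw(G') = \Oh(\sqrt{b})$. Let $G_B := G'[\mathcal{B}(G')]$ be the subgraph induced by the bad vertices. I would argue that every edge of $G'$ either joins two bad vertices, in which case it lies in $G_B$, or joins a bad vertex to a good vertex: indeed, by property (ii) of \Cref{lem:reduction-to-simple-instance} no edge joins two good vertices, and by property (iii) each good vertex is the degree-$1$ endpoint of exactly one such edge. Hence $G'$ is obtained from $G_B$ by attaching pendant vertices. As $G_B$ is planar with at most $b$ vertices, the planar branchwidth bound~\cite{FominT06} gives $\bw(G_B) = \Oh(\sqrt{b})$, and therefore $\tw(G_B) = \Oh(\sqrt{b})$ by the branchwidth--treewidth relation. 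Attaching pendant vertices does not increase treewidth beyond $\max(\tw(G_B),1)$: starting from a tree decomposition of $G_B$, for each pendant edge $\{u,g\}$ with $g \in \mathcal{G}(G')$ I add a new bag $\{u,g\}$ adjacent to an arbitrary bag containing $u$. Thus $\tw(G') = \Oh(\sqrt{b})$, and converting back, $\bw(G') \le \tw(G') + 1 = \Oh(\sqrt{b})$.

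Finally I would run the algorithm of \Cref{thm:mwbs-fpt-bw} on $(G',w)$, which solves MWBS in $2^{\Oh(\bw(G'))} \cdot |V(G')|^{\Oh(1)} = 2^{\Oh(\sqrt{b})} \cdot n^{\Oh(1)}$ time; by the equivalence of $(G',w)$ and $(G,w)$ this answers the original instance. I expect the branchwidth estimate of the previous paragraph to be the main obstacle: the subtlety is that $G'$ may still contain up to $\Theta(n)$ good vertices, so a naive application of the planar bound to $G'$ only yields $\Oh(\sqrt{n})$. The argument must therefore isolate the bad-vertex core $G_B$ and show that the degree-$1$ good vertices contribute nothing to the width, which is exactly what the pendant-vertex reasoning on the tree decomposition accomplishes. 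Everything else is a routine assembly of results already established in the excerpt.
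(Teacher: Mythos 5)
Your proposal is correct and follows essentially the same route as the paper's own proof: apply \Cref{lem:reduction-to-simple-instance}, bound the branchwidth of the reduced instance by $\Oh(\sqrt{b})$, and invoke \Cref{thm:mwbs-fpt-bw}. The only difference is that you spell out the pendant-vertex/tree-decomposition argument for why the degree-$1$ good vertices do not inflate the width, a step the paper asserts in one line (``This implies $\bw(G') = \Oh(\tw(G')) = \Oh(\sqrt{b})$'') without proof, so your write-up is, if anything, more complete.
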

\begin{proof}
 By \Cref{lem:reduction-to-simple-instance}, $(G,w)$ is equivalent to $(G',w)$ with at most $b$ vertices of degree $> 1$, which we can compute in polynomial time.
    This implies $\bw(G') = \Oh(\tw(G')) = \Oh(\sqrt{b})$, and we can apply \Cref{thm:mwbs-fpt-bw} to obtain an algorithm that computes a solution for $(G', w)$ in $2^{\Oh(\bw(G'))} |V(G')|^{\Oh(1)}$ time. 
\end{proof}

We now describe how we can partition, for a given input, all good-edge sections into edge sets in such a way that there exists an optimal solution in which every set is either contained or deleted completely, and the total number of sets is bounded in a function of $b$. We will then show how we can replace the sets with edge sets of size at most two. The main difficulty will be to ensure that sets that exclude each other continue to do so in the reduced instance.

\begin{restatable}[*]{lemma}{lemtransformsectionstocutsets}\label{lem:transform-sections-to-cutsets}
	Let $(G, w)$ be an instance of MWBS with $n$ vertices and $b$ bad vertices, such that ${\cal G}(G)$ is an independent set in $G$ and $\mathrm{deg}(v) = 1$ for all $v \in {\cal G}(G)$.
	Let further $v \in {\cal B}(G)$, and let $S$ be a good edge-section of $v$.
	Then $S$ can be partitioned into at most 26 sets $S_1, \dots, S_{26}$, such that for every optimal solution $G' \subseteq G$ of $MWBS(G,w)$, there exists an optimal solution $G^* \subseteq G$ of $MWBS(G,w)$, such that $G'$ and $G^*$ coincides on $G \setminus S$, and for every $i$, $S_i$ is either contained or removed completely in $G^*$.

    Further, there exists a partition $P_1, \dots, P_j$ of $\{S_1, \dots, S_{26}\}$, such that for all $P_i$:
    (1) $|P_i|\leq 2$, (2) the edges in $P_i$ are consecutive in $S$ and (3) if $P_i = \{\mathcal{S}_1, \mathcal{S}_2\}$, then $\mathcal{S}_1$ consists of outgoing edges of $v$ iff $\mathcal{S}_1$ consists of incoming edges of $v$, and at least one of $\mathcal{S}_1, \mathcal{S}_2$ does not form a set of consecutive edges in $S$.
\end{restatable}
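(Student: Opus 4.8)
The plan is to exploit the single degree of freedom that a good edge-section grants us. Since every edge of $S$ has its other endpoint in a degree-one good vertex, keeping or deleting any subset of $S$ leaves all those endpoints bimodal (each becomes isolated or retains its only edge) and changes the bimodality status of no vertex other than $v$. Hence, once the behaviour of $G'$ on $G \setminus S$ is fixed, the only constraint on the surviving edges of $S$ is that, together with the surviving edges incident to $v$ outside $S$, they make $v$ bimodal. First I would record the structural consequence of this: in any bimodal subgraph the surviving edges of $S$, read clockwise around $v$, form one of the six patterns of $C$. Indeed, a bimodal vertex has cyclic pattern $i^* o^*$ (at most two switches between in- and out-edges), and since $S$ is a contiguous arc it can contain at most those two switches, so its surviving sub-pattern has at most two switches and therefore lies in $C$.

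For each configuration $X \in C$ I would define the \emph{canonical} maximum-weight filling of $S$ realizing $X$, i.e.\ the heaviest subset of $S$ whose surviving pattern is exactly $X$. The key step is an exchange/wedge-alignment argument: since all edges inside one in- or out-wedge share a type, the marginal weight gained or lost as a block boundary sweeps across a wedge is monotone, so an optimal block boundary can always be taken at a wedge boundary. Consequently each canonical filling keeps, in every maximal block of $X$, exactly the edges of that block's type lying in a wedge-aligned range and deletes the opposite type; the two- and three-block configurations reduce to a single optimal cut, respectively to a maximum-weight sub-interval of the array that weights in-wedges $+w$ and out-wedges $-w$ (and its negation). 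Since each of the six canonical fillings is specified by at most two such cut points, the collection of all their cut points has constant size, so the common refinement of the six fillings, split within each resulting interval into its in-edges and its out-edges, is a partition of $S$ into a constant number of sets; a careful accounting bounds this by at most $26$. These sets are $S_1, \dots, S_{26}$, and by construction every canonical filling is a union of some of them.

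The soundness statement of the first part then follows directly. Given an arbitrary optimal $G'$, let $X$ be the configuration that its surviving edges of $S$ realize, and let $G^*$ agree with $G'$ on $G \setminus S$ while using the canonical filling for $X$ inside $S$. Because $G^*$ realizes the same interface pattern $X$ at $v$, it combines with the unchanged remainder exactly as $G'$ does, so $v$ (and every other vertex, whose incident structure is untouched) stays bimodal; and since the canonical filling is weight-maximal among all fillings realizing $X$, the weight of $G^*$ is at least that of $G'$, hence $G^*$ is also optimal. By construction $G^*$ treats each $S_i$ atomically, as asserted.

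For the second part I would pair the refinement sets. Within each interval of the common refinement the two sets $\{\text{in-edges}\}$ and $\{\text{out-edges}\}$ are of opposite type and mutually exclusive (every canonical filling keeps only one type per interval), so I would place them into a common block $P_i$; their union is the whole interval, which is consecutive in $S$. When the interval spans at least three wedges, the type occurring in at least two of them yields a non-contiguous set, giving condition~(3); when it spans only one or two wedges I would instead emit the (at most two) nonempty singletons, for which condition~(3) is vacuous. This produces the desired $P_1, \dots, P_j$ with $|P_i| \le 2$, consecutive edges, opposite types, and at least one non-contiguous set in every genuine pair. The main obstacle I anticipate is the exchange argument together with the bookkeeping that turns it into a constant bound: one must verify that wedge-aligned canonical fillings are optimal for every interface, that exactly these fillings suffice to realize every optimal $G'$ (the ``for every $G'$'' quantifier), and that the refinement-and-pairing can always be arranged to meet condition~(3), in particular handling short intervals so that no genuine pair consists of two contiguous wedges.
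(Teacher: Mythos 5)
Your proposal follows essentially the same route as the paper's proof: compute, for each of the six configurations in $C$, a weight-optimal canonical filling of $S$ (the paper's Claim inside the proof), use the at most $6 \times 2 = 12$ resulting switch positions to cut $S$ into at most 13 intervals, split each interval into its in- and out-edges to get the at most 26 sets, and replace the inside of the enclosing noose in any optimal $G'$ by the canonical filling for its realized configuration. Your extra wedge-alignment observation and the explicit singleton treatment of intervals where both type-classes are contiguous are harmless refinements (the latter actually handles condition (3) more carefully than the paper's own write-up), so the argument is correct and matches the paper.
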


To show this, we enclose $S$ in a curve $\phi$, and then compute for every given configuration $X$ the maximal subgraph $G'$ such that $v$ has configuration~$X$ in~$\phi$. This yields a set of at most 12 possible locations for switches between incoming and outgoing edges in $S$, which gives a partition of $S$ into at most 13 sets (corresponding to $P_1, \dots, P_j$) that do not contain a switch, and thus at most 26 sets that will not be separated by an optimal solution, corresponding to $S_1, \dots, S_{26}$.
We now describe a parameter-preserving reduction from MWBS~to~Cut-MWBS.

\begin{restatable}[*]{lemma}{lemtransformMWBStoCutMWBS}\label{lem:transform-MWBS-to-CutMWBS}
	Given an instance $(G,w)$ of MWBS with $b$ bad vertices, we can find in polynomial time an instance $(G', w, \mathcal{E})$ of Cut-MWBS, so that:
	(i) For every $\mathcal{E}_i \subseteq \mathcal{E}$ with $|\mathcal{E}_i| \geq 2$, there exists a bad vertex $v \in G'$ and a good edge-section $S$ of $v$, so that $\mathcal{E}_i$ is a subset of $S$ and $\mathcal{E}_i$ contains only outgoing or only incoming edges of $v$.
	(ii) $|{\cal B}(G')| \leq b$,
	(iii) $|\mathcal{E}| = \Oh(b^2)$,
	(iv) $(G,w)$ and $(G', w, \mathcal{E})$ have the same optimal cost,
    (v) there exists a partition $P_1, \dots, P_j$ of $\mathcal{E}$, such that $|P_i|\leq 2$ for all $P_i$,
    (vi) if $|P_i| = 1$, then the edges-set contained in $P_i$ is either an edge between two bad vertices, or there exists a bad vertex $v \in G'$ and good edge-section $S$ of $v$, such that the edges contained in $P_i$ are all consecutive in $S$,
    and, 
    (vii) if $|P_i| = 2$ with $P_i = \{\mathcal{E}_1, \mathcal{E}_2\}$, there exists some $v \in {\cal B}(G')$ and a good edge-section $S$ of $v$, such that the edges 
    in $P_i$ are all consecutive in $S$; and $\mathcal{E}_1$ consists of outgoing edges of $v$ if and only if $\mathcal{E}_1$ consists of incoming edges of $v$, and at least one of $\mathcal{E}_1, \mathcal{E}_2$ does not form a set of consecutive edges in $S$.
\end{restatable}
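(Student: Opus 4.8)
The plan is to combine the three main reduction rules with the structural result about good edge-sections to build the \textsc{Cut-MWBS} instance, and then verify each of the seven listed properties in turn. First I would apply \Cref{lem:reduction-to-simple-instance} to the input instance $(G,w)$, obtaining an equivalent instance on a subgraph where ${\cal G}(G)$ is independent and every good vertex has degree exactly one. This normalized form is exactly the hypothesis required by \Cref{lem:transform-sections-to-cutsets}, so I can invoke that lemma on every good edge-section, and it is the source of the bound in property~(i), the structure in property~(vii), and the partition $P_1,\dots,P_j$ with $|P_i|\le 2$ required by properties~(v)--(vii).

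Next I would construct the partition $\mathcal{E}$ of $E(G')$ explicitly. For each bad vertex $v$ and each good edge-section $S$ of $v$, \Cref{lem:transform-sections-to-cutsets} hands me at most $26$ sets $S_1,\dots,S_{26}$ that any optimal solution keeps or deletes atomically; I take exactly these as the blocks of $\mathcal{E}$ lying inside $S$. The remaining edges, namely those joining two bad vertices, each form a singleton block of $\mathcal{E}$. This directly yields property~(vi): a block of size one is either a bad-to-bad edge or a consecutive subset of some section. For the cardinality bound in property~(iii), I would count: by \Cref{obs:bounded-sections} each of the $b$ bad vertices has at most $b-1$ good edge-sections, each section contributes $\Oh(1)$ (at most $26$) blocks, and the bad-to-bad edges number $\Oh(b^2)$ since the underlying simple graph on the bad vertices is planar, giving $|\mathcal{E}| = \Oh(b^2)$. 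Property~(ii) is immediate because the reductions feeding into this construction do not increase the number of bad vertices, and property~(i) follows because every non-singleton block arises from a single section of a single bad vertex and, by the wedge structure, consists of edges of one orientation.

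Property~(iv), equality of optimal costs, is where the soundness of the whole pipeline is cashed in. I would argue it in two directions: every solution of \textsc{Cut-MWBS$(G',w,\mathcal{E})$} is in particular a bimodal subgraph of $G'$ respecting the cut constraints, hence a feasible (and no better) MWBS solution; conversely, given an optimal MWBS solution $G'_{\mathrm{opt}}$, the ``there exists an optimal solution $G^*$'' clause of \Cref{lem:transform-sections-to-cutsets} lets me replace it, section by section, with an equivalent optimal solution that keeps or deletes each $S_i$ atomically, i.e.\ that respects $\mathcal{E}$, and therefore is feasible for \textsc{Cut-MWBS} with the same cost. One must be slightly careful that the section-by-section replacements do not interfere, but since distinct sections and distinct bad vertices involve disjoint edge sets and the lemma guarantees that $G'$ and $G^*$ coincide off $S$, the exchanges can be performed independently.

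The main obstacle, and the part requiring the most care, is establishing the refined partition $P_1,\dots,P_j$ of $\mathcal{E}$ satisfying properties~(v)--(vii) \emph{globally}, rather than just inside one section. The technical content of these properties—blocks of size two whose two halves have opposite orientation and at least one of which is non-contiguous in $S$—is delivered locally by the final paragraph of \Cref{lem:transform-sections-to-cutsets}, which pairs its $26$ sets into at most $13$ consecutive groups. I would lift this by taking $P_1,\dots,P_j$ to be the union, over all bad vertices and sections, of these local pairings, together with singletons for the bad-to-bad edges. The delicate step is checking that these local pairings collectively partition \emph{all} of $\mathcal{E}$ and that the orientation/consecutiveness guarantees transfer verbatim, since the statement of property~(vii) mirrors the lemma's clause~(3) almost word for word; the only real work is bookkeeping that every block of $\mathcal{E}$ lands in exactly one $P_i$ and that the $|P_i|\le 2$ bound survives the union.
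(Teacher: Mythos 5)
Your proposal is correct and follows essentially the same route as the paper: first normalize via \Cref{lem:reduction-to-simple-instance}, then apply \Cref{lem:transform-sections-to-cutsets} to each good edge-section, take $\mathcal{E}$ to be the union of the resulting section partitions plus singletons for the bad-to-bad edges, and verify (i)--(vii) by combining \Cref{obs:bounded-sections} with the local guarantees of \Cref{lem:transform-sections-to-cutsets}. The only cosmetic difference is that you bound the bad-to-bad edges by $\Oh(b^2)$ where the paper uses planarity to get $\Oh(b)$, but either bound suffices for property (iii).
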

\begin{figure}[tb]
    \centering
    \subfigure[]{\includegraphics[page=1,width=.43\textwidth]{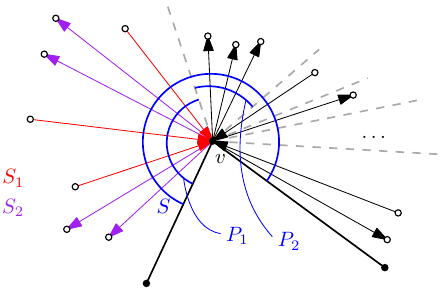}\label{fi:lem45-a}}
    \hfil
    \subfigure[]{\includegraphics[page=2,width=.43\textwidth]{lem4u5.pdf}\label{fi:lem45-b}}
    \caption{(a) Illustration for Lemmas  \ref{lem:transform-sections-to-cutsets} and \ref{lem:transform-MWBS-to-CutMWBS}. The gray dashed lines correspond to a set of switches between the optimal solution we will choose; they impose the partition $P_1, \dots, P_{13}$. $S_1$ ($S_2$) are the incoming (outgoing) edges of $P_1$, respectively. 
    (b) The same vertex after transition to \textsc{Cut-MWBS} by \Cref{lem:transform-MWBS-to-CutMWBS}, and after Reduction Rule \ref{redrule:melt-adjacent-edges} (\ref{redrule:simplify-cutsets}) got applied to $P_2$ ($P_1$), respectively.}
    \label{fi:lem45}
\end{figure}
See \Cref{fi:lem45-a} for a visualization. We obtain this transformation by applying \Cref{lem:reduction-to-simple-instance} in order to get a simplified equivalent instance $G'$. Let $E_\mathrm{rest}$ be all edges incident to two bad vertices. For every bad vertex $v$ and every good edges section $S$ of $v$, let $\mathcal{S}_{v, S}$ be the partition of $S$ obtained from \Cref{lem:transform-sections-to-cutsets}. We define $\mathcal{E} = \{e \mid e \in E_\mathrm{rest}\} \cup \bigcup_{v, S} \mathcal{S}_{v, S}$.
This defines the instance $(G', w, \mathcal{E})$ of Cut-MWBS. We will now further reduce the size of $(G,w,\mathcal{E})$.
\begin{redrule} \label{redrule:melt-adjacent-edges}
	Let $(G,w,\mathcal{E})$ be an instance of Cut-MWBS with  properties (i) to (vii) of \Cref{lem:transform-MWBS-to-CutMWBS}.
	Let  $v \in {\cal B}(G)$, let $S$ be a good edge-section of $v$, and let $\mathcal{E}_i \in \mathcal{E}$ such that $\mathcal{E}_i \subseteq S$ is a \emph{consecutive} set of edges in $S$.
	Then let $(G',w',\mathcal{E'})$ be the new instance that is obtained from $(G,w,\mathcal{E})$ by deleting all edges (and their incident good vertices) but one edge $e$ out of $\mathcal{E}_i$, and assigning $w'(e) = w(\mathcal{E}_i)$.
\end{redrule}

\begin{redrule} \label{redrule:simplify-cutsets}
	Let $(G,w,\mathcal{E})$ be an instance of Cut-MWBS with the properties (i) to (vii) of \Cref{lem:transform-MWBS-to-CutMWBS}.
	Let further $v\in {\cal B}(G)$, let $S$ be a good edge-section of $v$, and let $\mathcal{E}_\mathrm{in}, \mathcal{E}_\mathrm{out} \in \mathcal{E}$ such that $\mathcal{E}_\mathrm{in}, \mathcal{E}_\mathrm{out} \subseteq S$, $\mathcal{E}_\mathrm{in}$ are all incoming to $v$, $\mathcal{E}_\mathrm{out}$ are all outgoing of $v$, $\mathcal{E}_\mathrm{in} \cup \mathcal{E}_\mathrm{out}$ is a consecutive set of edges in $S$, and at least one of $\mathcal{E}_\mathrm{in}$ or $\mathcal{E}_\mathrm{out}$ does not form a consecutive set of edges in $S$.
	We construct a new edge-set $e_1, e_2, e_3, e_4$ as follows: $e_1, e_3$ are incoming for $v$, $e_2, e_4$ are outgoing of $v$, and all of $e_1, e_2, e_3, e_4$ are incident to a newly inserted (good) vertex $v_{e_k}$ for $k \in \{1,\dots, 4\}$. We set $w'(e_1) = w'(e_4) = 0$, $w'(e_2)= w(\mathcal{E}_\mathrm{out})$ and $w'(e_3)= w(\mathcal{E}_\mathrm{in})$. Further we assign $e_1, e_3 \in \mathcal{E}_\mathrm{in}$ and $e_2, e_4 \in \mathcal{E}_\mathrm{out}$.
	Let $(G',w',\mathcal{E})$ be the new instance that is obtained from $(G,w,\mathcal{E})$ by replacing the edges in $\mathcal{E}_i \cup \mathcal{E}_j$ with the consecutive sequence $e_1, e_2, e_3, e_4$.
\end{redrule}

\begin{restatable}[*]{cl}{claimsoundnesofcutsetreducings}\label{claim:soundnes-of-cutset-reducings}
	Reductions \ref{redrule:melt-adjacent-edges} and \ref{redrule:simplify-cutsets} are sound.
\end{restatable}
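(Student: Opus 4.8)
The plan is to prove each reduction rule sound separately, where soundness means the rule runs in polynomial time and preserves the optimal cost of the Cut-MWBS instance (equivalently, yes-instances map to yes-instances in both directions for the decision version). Throughout I would rely on the structural guarantees (i)--(vii) from \Cref{lem:transform-MWBS-to-CutMWBS}, since both rules are stated only for instances carrying these properties, and I would verify that each rule preserves the relevant properties so that the rules can be applied exhaustively.

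For \Cref{redrule:melt-adjacent-edges}, the key observation is that $\mathcal{E}_i$ is a \emph{consecutive} set of edges in the good edge-section $S$ that is forced (by the cut constraint) to be included or deleted as a single unit. First I would argue that collapsing $\mathcal{E}_i$ to a single representative edge $e$ of weight $w(\mathcal{E}_i)$ does not change bimodality: since the edges of $\mathcal{E}_i$ are consecutive around $v$ and, by property (vi), lie within one good edge-section (so they are all incoming or all outgoing and their other endpoints are degree-$1$ good vertices), contracting them to one edge preserves the cyclic pattern of in/out wedges at $v$, hence preserves whether $v$ is bimodal, and the deleted good endpoints become isolated and are irrelevant. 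Because the whole set was anyway constrained to be all-in-or-all-out, any solution to the original instance projects to a solution of the reduced one of equal weight and conversely; the weight bookkeeping matches because $w'(e)=w(\mathcal{E}_i)$. I would also note that the resulting instance still satisfies (i)--(vii), as the replacement keeps the edge within the same section with the same orientation.

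For \Cref{redrule:simplify-cutsets}, the situation is the genuinely delicate one and I expect this to be the main obstacle. Here $\mathcal{E}_\mathrm{in}$ and $\mathcal{E}_\mathrm{out}$ together occupy a consecutive block of $S$, but at least one of them is \emph{not} consecutive (the in- and out-edges interleave), so we cannot simply melt them; we must faithfully reproduce the constraint that each of $\mathcal{E}_\mathrm{in}$, $\mathcal{E}_\mathrm{out}$ is kept-or-deleted as a block while preserving exactly which bimodal configurations of $v$ remain achievable. The plan is to replace the interleaved block by the four-edge gadget $e_1,e_2,e_3,e_4$ (pattern in-out-in-out) with weights $0, w(\mathcal{E}_\mathrm{out}), w(\mathcal{E}_\mathrm{in}), 0$ and cut assignment $e_1,e_3 \in \mathcal{E}_\mathrm{in}$, $e_2,e_4 \in \mathcal{E}_\mathrm{out}$. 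I would then prove the crux: the gadget induces on $v$ exactly the same set of feasible local configurations (the same achievable strings from $C$ after deletions) as the original interleaved block did, because an interleaved in/out block of length $\geq 3$ contributes the maximal ``$i,o,i$''/``$o,i,o$'' switching pattern, and the four-edge pattern is the minimal realization of that same pattern subject to the two cut sets. The zero-weight edges $e_1,e_4$ act as padding that guarantees the switches are present exactly when $\mathcal{E}_\mathrm{in}$ (resp.\ $\mathcal{E}_\mathrm{out}$) is retained, so that no configuration is gained or lost, and the nonzero weights $w(\mathcal{E}_\mathrm{in}),w(\mathcal{E}_\mathrm{out})$ reproduce the original contribution to the objective. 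Establishing this configuration-equivalence carefully is the heart of the argument.

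Finally I would assemble the two directions: given an optimal solution of the original instance, its restriction to $v$'s section chooses some subset of $\{\mathcal{E}_\mathrm{in},\mathcal{E}_\mathrm{out}\}$ (forced all-or-nothing by the cut partition), and I would map it to the corresponding subset of $\{e_1,e_3\}$ resp.\ $\{e_2,e_4\}$ in the gadget, checking that $v$'s bimodality status and the contribution of the rest of $G$ (unchanged outside the block) are identical; the converse map is symmetric. Since the objective values agree and both rules run in polynomial time (each touches only the edges of one section), both reductions are sound, which is exactly \Cref{claim:soundnes-of-cutset-reducings}. The one point I would be most careful about is ensuring the planar embedding of the gadget is consistent with the surrounding edges so that ``consecutive in $S$'' and the clockwise order around $v$ are genuinely preserved, since the whole correctness hinges on the local cyclic in/out pattern being reproduced verbatim.
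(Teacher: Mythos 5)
Your handling of \Cref{redrule:melt-adjacent-edges} is fine and essentially identical to the paper's (collapse a consecutive, same-orientation set to one representative edge carrying the total weight). The gap is in \Cref{redrule:simplify-cutsets}: the crux claim that the four-edge gadget ``induces on $v$ exactly the same set of feasible local configurations'' as the original interleaved block, so that ``no configuration is gained or lost,'' is false. Take $\mathcal{E}_\mathrm{in}=\{i_1,i_2\}$ and $\mathcal{E}_\mathrm{out}=\{o_1\}$ arranged as $i_1,o_1,i_2$; this is admissible for the rule, which only requires that \emph{at least one} of the two sets be non-consecutive. Keeping both cut sets in the original instance yields the local pattern $(i,o,i)$ with only two internal switches, and if all remaining edges of $v$ are incoming, $v$ is bimodal, i.e.\ ``keep both'' is a feasible choice in the original instance. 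In the gadget, keeping both cut sets yields $e_1,e_2,e_3,e_4=(i,o,i,o)$, three internal switches, which can never be completed to a bimodal cyclic order. So the gadget strictly loses the option of keeping both sets; the solution sets are not in bijection, and any argument resting on exact configuration-equivalence (including your final ``assemble the two directions'' step, which maps an original optimum keeping both sets to the gadget) breaks down at precisely this point.

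What actually saves the reduction --- and what the paper's proof invokes --- is not equivalence of feasible solutions but equivalence of \emph{optimal cost}, via the structural guarantee inherited from \Cref{lem:transform-sections-to-cutsets} (reflected in property (vii) of \Cref{lem:transform-MWBS-to-CutMWBS}): there exists an optimal solution in which no in/out switch of $v$ occurs inside the consecutive block $\mathcal{E}_\mathrm{in}\cup\mathcal{E}_\mathrm{out}$. Since the two sets interleave, such an optimum keeps at most one of them, so discarding the ``keep both'' option costs nothing; that optimum maps to a gadget solution of equal cost (delete $\{e_1,e_3\}$ exactly when $\mathcal{E}_\mathrm{in}$ is deleted, and $\{e_2,e_4\}$ exactly when $\mathcal{E}_\mathrm{out}$ is deleted), and conversely every feasible gadget solution maps back to a feasible original solution of equal cost, because keeping edges of only one orientation produces a consecutive run of that orientation in both graphs. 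Your proof needs this appeal to the earlier lemma (or an equivalent exchange argument showing that some optimum avoids switches inside the block); without it, the local equivalence on which your argument rests is simply not true, even though the claim itself is.
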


\begin{restatable}[*]{lemma}{lemboundnumberofedgesinCutMWBS}\label{lem:bound-number-of-edges-in-CutMWBS}
	Let $(G,w,\mathcal{E})$ be an instance of Cut-MWBS with $b$ bad vertices and properties (i) till (vii) of \Cref{lem:transform-MWBS-to-CutMWBS}.
	Then we can compute in polynomial time an equivalent instance $(G',w',\mathcal{E}')$ such that $V(G') = \Oh(b^2)$.
\end{restatable}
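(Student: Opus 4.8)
The plan is to apply Reduction Rules~\ref{redrule:melt-adjacent-edges} and~\ref{redrule:simplify-cutsets} exhaustively, guided by the partition $P_1,\dots,P_j$ of $\mathcal{E}$ supplied by property~(v), and then to bound $|V(G')|$ by a direct counting argument. The quantity controlling everything is the number of parts: since $P_1,\dots,P_j$ partitions the family $\mathcal{E}$ into nonempty parts and $|\mathcal{E}| = \Oh(b^2)$ by property~(iii), we have $j \le |\mathcal{E}| = \Oh(b^2)$. So it suffices to show that, after the reductions, each part contributes only $\Oh(1)$ edges and $\Oh(1)$ good vertices.

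I would process the parts independently. Because the $P_i$ are pairwise disjoint, a reduction applied inside one part cannot create or destroy a precondition of another, so the order is irrelevant and the process terminates after polynomially many polynomial-time steps. For a part with $|P_i| = 1$, property~(vi) gives two cases: if the single set is an edge between two bad vertices there is nothing to do; otherwise the set is a consecutive block of edges inside a good edge-section of some bad vertex, and I apply Reduction Rule~\ref{redrule:melt-adjacent-edges} to collapse it into a single edge whose weight is the total weight of the block. For a part with $|P_i| = 2$, property~(vii) is precisely the precondition of Reduction Rule~\ref{redrule:simplify-cutsets}---the two sets are the incoming and the outgoing parts of the same bad vertex $v$, their union is consecutive in a good edge-section, and at least one of them is itself non-consecutive---so I apply that rule and replace the two sets by the fixed four-edge gadget $e_1,e_2,e_3,e_4$ attached to four fresh good vertices. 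By \Cref{claim:soundnes-of-cutset-reducings} both rules are sound, so the resulting instance $(G',w',\mathcal{E}')$ is equivalent to $(G,w,\mathcal{E})$.

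It remains to count. After the reductions each part accounts for at most four edges (four from Rule~\ref{redrule:simplify-cutsets}, one from Rule~\ref{redrule:melt-adjacent-edges} or from an untouched bad--bad edge), hence $|E(G')| \le 4j = \Oh(b^2)$. The number of bad vertices is never increased and stays at most $b$. For the good vertices, the point is that both rules preserve the simple structure inherited from \Cref{lem:reduction-to-simple-instance}: every good vertex has degree exactly one (Rule~\ref{redrule:melt-adjacent-edges} retains a single edge together with its one good endpoint, while Rule~\ref{redrule:simplify-cutsets} attaches each of $e_1,\dots,e_4$ to its own new degree-one good vertex). Since distinct good vertices are then incident to distinct edges, the number of good vertices is at most $|E(G')| = \Oh(b^2)$. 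Summing, $|V(G')| = \Oh(b^2)$.

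The genuine difficulty here lies upstream, in \Cref{lem:transform-sections-to-cutsets} (which yields the constant-size partition of each good edge-section) and in the soundness of the two gadget reductions---especially checking that Rule~\ref{redrule:simplify-cutsets} encodes ``two interleaved sets that must be kept or dropped together'' with just four edges while keeping $v$ bimodal. Granting those, the present statement is essentially bookkeeping; the one place needing care is verifying that the reductions keep good vertices at degree one, since that is exactly what lets the $\Oh(b^2)$ edge bound transfer to an $\Oh(b^2)$ vertex bound.
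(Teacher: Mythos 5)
Your proposal is correct and follows essentially the same route as the paper: apply Reduction Rules~\ref{redrule:melt-adjacent-edges} and~\ref{redrule:simplify-cutsets} exhaustively (sound by \Cref{claim:soundnes-of-cutset-reducings}), observe that afterwards every set of the partition contributes $\Oh(1)$ edges so that $|E(G')| = \Oh(|\mathcal{E}|) = \Oh(b^2)$, and transfer this to a vertex bound because no vertex is left isolated. Your counting via the parts $P_i$ (at most four edges each) and the explicit check that good vertices keep degree one is just a slightly more detailed version of the paper's bookkeeping, which counts at most two edges per set of $\mathcal{E}'$; both yield the same $\Oh(b^2)$ bound.
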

See \Cref{fi:lem45-b} for an illustration. We compute $(G',w',\mathcal{E}')$ by applying Reductions \ref{redrule:melt-adjacent-edges} and \ref{redrule:simplify-cutsets} exhaustively. 
To bound the size of the weights $w$, we use the approach of Etscheid et al.~\cite{EtscheidKMR17} and the well-known \cref{thm:franktardos}. 
This yields the compression of MWBS (\cref{thm:weightedcompression-b}) and a kernel for MWBS (\cref{thm:kernel-MWBS}).

\begin{theorem}[\cite{frank1987application}] \label{thm:franktardos}
    There is an algorithm that, given a vector $\omega \in \mathbb{Q}^{r}$ and an integer $N$, in polynomial time finds a vector $\bar{\omega}$ such that $||\bar{\omega}||_{\infty} = 2^{\Oh(r^3)}$ and $\text{sign}(\omega \cdot b) = \text{sign}(\bar{\omega} \cdot b)$ for all vectors $b \in \mathbb{Z}^r$ with $||b||_{1} \le N-1$.
\end{theorem}

\begin{restatable}[*]{theorem}{thmweightedcompressionb} \label{thm:weightedcompression-b}
    There exists a polynomial-time algorithm that, given an instance $(G,w)$ of MWBS with $b$ bad vertices and a target value $W$, computes an instance $(G',w',\mathcal{E})$ of Cut-MWBS with size $\Oh(b^8)$, and a new target value $W'$ with size $\Oh(b^6)$, such that there exists a solution for $(G,w)$ of cost $W$ if and only if there exists a solution for $(G',w',\mathcal{E})$ of cost $W'$.
\end{restatable}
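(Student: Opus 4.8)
The plan is to assemble Theorem~\ref{thm:weightedcompression-b} by chaining together the reductions already developed in this section and then applying the Frank--Tardos machinery (\cref{thm:franktardos}) to control the bit-length of the weights. Concretely, I would proceed in three phases. \emph{First}, starting from the instance $(G,w)$ with target $W$, apply \cref{lem:transform-MWBS-to-CutMWBS} to obtain in polynomial time an equivalent instance $(G_1,w,\mathcal{E}_1)$ of \textsc{Cut-MWBS} satisfying properties (i)--(vii), with $|{\cal B}(G_1)|\leq b$ and $|\mathcal{E}_1|=\Oh(b^2)$. \emph{Second}, feed this into \cref{lem:bound-number-of-edges-in-CutMWBS}, which applies Reduction Rules~\ref{redrule:melt-adjacent-edges} and~\ref{redrule:simplify-cutsets} exhaustively (sound by \cref{claim:soundnes-of-cutset-reducings}) to produce an equivalent instance $(G_2,w',\mathcal{E}_2)$ with $|V(G_2)|=\Oh(b^2)$. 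Since $G_2$ is planar with $\Oh(b^2)$ vertices, it has $\Oh(b^2)$ edges and $\Oh(b^2)$ sets in the partition $\mathcal{E}_2$. This settles the combinatorial size: the instance has $\Oh(b^2)$ vertices and edges before any encoding of the weights.

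\emph{Third}, I would bound the encoding length of the weights so that the total bit-size becomes $\Oh(b^8)$ and the target $\Oh(b^6)$. The key observation is that a solution to \textsc{Cut-MWBS} selects a subset of the $r=\Oh(b^2)$ cut-sets, and the cost of any solution is an inner product $w'\cdot\chi$ where $\chi\in\{0,1\}^r$ is the incidence vector of the chosen sets (after aggregating each cut-set's total weight into a single coordinate). To compare a solution's cost against the threshold $W'$, it suffices to know the sign of $\bar w'\cdot b$ for all integer vectors $b$ of the form $\chi-\chi'$ or $\chi-W'$-indicator, each having $\ell_1$-norm at most $N-1$ for $N=\Oh(r)=\Oh(b^2)$. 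Applying \cref{thm:franktardos} to the vector $(w',W')\in\mathbb{Q}^{r+1}$ with this $N$ yields an equivalent rational vector $(\bar w',\bar W')$ whose entries satisfy $\|(\bar w',\bar W')\|_\infty=2^{\Oh((r+1)^3)}=2^{\Oh(b^6)}$, so each weight uses $\Oh(b^6)$ bits and the new target $W'$ uses $\Oh(b^6)$ bits. Since there are $\Oh(b^2)$ edges each carrying a weight of bit-length $\Oh(b^6)$, the overall size of the instance is $\Oh(b^2)\cdot\Oh(b^6)=\Oh(b^8)$. The sign-preservation guarantee of \cref{thm:franktardos} ensures that for every candidate solution the comparison of its cost against the target is preserved, so $(G_2,w',\mathcal{E}_2)$ with target $W'$ has a solution of cost $\geq W'$ under $\bar w'$ if and only if the original instance $(G,w)$ has one of cost $\geq W$; this is exactly the claimed equivalence.

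The main obstacle I anticipate is the careful bookkeeping in the third phase: I must argue that the sign of $(w',W')\cdot b$ over the relevant integer test vectors $b$ exactly captures the ``cost $\geq W'$'' comparison, and that the test vectors indeed have bounded $\ell_1$-norm. This requires being precise about how cut-sets with bundled weights are encoded as single coordinates (so that a coordinate is chosen at most once, keeping entries of $\chi$ in $\{0,1\}$ and hence $\|\chi-\chi'\|_1\leq 2r$), and about how the threshold is folded in as an extra coordinate. This is the standard Etscheid et al.~\cite{EtscheidKMR17} argument, so I would cite it for the structural template while verifying that $N=\Oh(b^2)$ and $r=\Oh(b^2)$ in our setting yield precisely the stated $\Oh(b^8)$ and $\Oh(b^6)$ bounds. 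The two preceding phases are essentially invocations of already-proven lemmas, so beyond confirming that their outputs compose cleanly (the partition structure of $\mathcal{E}$ is preserved through each reduction, so each coordinate of the weight vector is well-defined), they present no real difficulty.
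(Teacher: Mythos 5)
Your proposal is correct and takes essentially the same route as the paper's proof: first chain \cref{lem:transform-MWBS-to-CutMWBS} with \cref{lem:bound-number-of-edges-in-CutMWBS} to obtain an equivalent Cut-MWBS instance with $\Oh(b^2)$ vertices and edges, then apply \cref{thm:franktardos} in the style of Etscheid et al.~\cite{EtscheidKMR17} to a vector of dimension $r = \Oh(b^2)$ containing the weights together with the target, which yields integer entries of magnitude $2^{\Oh(b^6)}$ and hence $\Oh(b^6)$ bits per entry and $\Oh(b^8)$ bits overall. The only cosmetic difference is that the paper runs Frank--Tardos on the per-edge weights (encoding a candidate solution as a $\{0,1\}$-indicator over edges with an extra $-1$ coordinate for $W$), whereas you aggregate each cut-set's weight into a single coordinate; the two encodings are equivalent because cut-sets are kept or deleted atomically.
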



\begin{restatable}[*]{theorem}{thmkernelMWBS}\label{thm:kernel-MWBS}
 The decision version of MWBS parameterized by the number of bad vertices $b$ admits a polynomial kernel.	
%
\end{restatable}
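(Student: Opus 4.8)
The plan is to obtain the polynomial kernel for the decision version of \textsc{MWBS} parameterized by $b$ by combining the polynomial \emph{compression} of \Cref{thm:weightedcompression-b} with the standard equivalence between polynomial compressions and polynomial kernels. Recall that \Cref{thm:weightedcompression-b} already gives, in polynomial time, an instance $(G',w',\mathcal{E})$ of \textsc{Cut-MWBS} of total size $\Oh(b^8)$ (including the encoding of the target value $W'$ of size $\Oh(b^6)$), which is a yes-instance if and only if the original \textsc{MWBS} instance $(G,w,W)$ is. This is precisely a \emph{polynomial compression} of the decision version of \textsc{MWBS} into the auxiliary problem \textsc{Cut-MWBS}.

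The key observation is that \textsc{Cut-MWBS} is in \NP\ (its decision version was already noted to be \NP-complete earlier in the excerpt: one can verify in polynomial time that a proposed subgraph is bimodal, respects the partition $\mathcal{E}$, and meets the weight target). Since the decision version of \textsc{MWBS} is itself \NP-complete, I would invoke the standard fact \cite[Theorem~1.6]{FominLSZ19} that a problem admitting a polynomial compression into any \NP\ (equivalently, \NP-complete) language admits a polynomial kernel. Concretely, compose the compression of \Cref{thm:weightedcompression-b} with a polynomial-time reduction from the \NP-complete \textsc{Cut-MWBS} instance back to the \NP-complete source problem \textsc{MWBS}; such a reduction exists because both problems are \NP-complete, and it runs in time polynomial in the (already polynomially bounded in $b$) size of the \textsc{Cut-MWBS} instance, hence produces an equivalent \textsc{MWBS} instance whose size is polynomial in $b$. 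This final \textsc{MWBS} instance is the desired polynomial kernel.

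The steps, in order, are: (1) apply \Cref{thm:weightedcompression-b} to reduce $(G,w,W)$ to an equivalent \textsc{Cut-MWBS} instance of size $\Oh(b^8)$ in polynomial time; (2) confirm that \textsc{Cut-MWBS} is in \NP; (3) fix a polynomial-time reduction from \textsc{Cut-MWBS} to \textsc{MWBS}, which exists since both are \NP-complete; and (4) compose these to get, in polynomial time, an equivalent \textsc{MWBS} instance of size polynomial in $b$, which is the kernel.

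I do not expect any serious obstacle here, since the heavy lifting is entirely contained in \Cref{thm:weightedcompression-b}; this theorem is essentially a bookkeeping wrapper. The only point requiring mild care is the technically standard distinction between a \emph{compression} (the output lives in a different problem, \textsc{Cut-MWBS}) and a genuine \emph{kernel} (the output must be an instance of the same problem \textsc{MWBS}). Bridging this gap is exactly what the generic reduction between \NP-complete problems in \cite[Theorem~1.6]{FominLSZ19} accomplishes, so the argument amounts to checking that the hypotheses of that theorem are met, namely that \textsc{MWBS} is \NP-complete and that \Cref{thm:weightedcompression-b} is a valid polynomial compression into the \NP\ problem \textsc{Cut-MWBS}.
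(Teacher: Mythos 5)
Your proposal is correct and follows essentially the same route as the paper's own proof: apply the compression of \Cref{thm:weightedcompression-b} to obtain a \textsc{Cut-MWBS} instance of size polynomial in $b$, then use the NP-completeness of both decision problems together with the standard argument of \cite[Theorem~1.6]{FominLSZ19} to translate the compressed instance back into an equivalent \textsc{MWBS} instance of size polynomial in $b$. Your additional remarks on verifying membership of \textsc{Cut-MWBS} in \NP{} and on the compression-versus-kernel distinction are exactly the (implicit) hypotheses the paper relies on.
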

\section{Efficient PTAS for MWBS and Final Remarks}\label{sec:EPTAS-MWBS}
We sketch our Efficient Polynomial-Time Approximation Scheme (EPTAS) for MWBS, i.e., a $(1-\epsilon)$-approximation that runs in $2^{\Oh(1/\epsilon)} \cdot n^{\Oh(1)}$ time. 
%
We use Baker's technique \cite{Baker94} to design our EPTAS. Our goal is to reduce the problem to (multiple instances of) the problem, where the treewidth (hence, branchwidth) of the graph is bounded by $\Oh(1/\epsilon)$, at the expense of an $\epsilon$-factor loss in cost. Then, we can use our single-exponential algorithm in the branchwidth to solve each such instance exactly, which implies a $(1-\epsilon)$-approximation.

We sketch the details of this reduction. W.l.o.g. assume that the graph is connected. We perform a breadth-first search starting from an arbitrary vertex $v \in V(G)$, and partition the vertex-set into layers $L_0, L_1, \ldots$, where $L_i$ is the set of vertices at distance \emph{exactly} $i$ from $v$ in the \emph{undirected} version of $G$. It is known that the treewidth of the subgraph induced by any $d$ consecutive layers is upper bounded by $\Oh(d)$ -- this follows from a result of Bodlaender \cite{Bodlaender98}, which states that the treewidth of a planar graph with diameter $D$ is $\Oh(D)$. Let $t = 1/\epsilon$, and for each $0 \le i \le t$, let $E^{(i, i+1)}$ denote edges $uv$ such that $u \in L_j$, $v \in L_{j+1}$ with $j \mod t = i$. By an averaging argument, there exists an index $0 \le i \le t$, such that the total contribution of all the edges from an optimal solution (i.e., the set of edges inducing a maximum-weight bimodal subgraph) that belong to $E^{(i, i+1)}$, is at most $1/t = \epsilon$ times the weight of the optimal solution. Since we do not know this index $i$, we consider all values of $i$, and consider the subproblems obtained by deleting the edges. Then, the graph breaks down into multiple connected components, and the treewidth of each component is $\Oh(1/\epsilon)$. We solve each such subproblem optimally in time $2^{\Oh(1/\epsilon)} \cdot n^{\Oh(1)}$ using \Cref{thm:mwbs-fpt-bw}, and combine the solutions for the subproblems to obtain a solution for the original instance. Note that the graph obtained by combining the optimal solutions for the subproblems is bimodal, and for the correct value of $i$, the weight of the graph is at least $1-\epsilon$ times the optimal cost. That is, the combined solution is a $(1-\epsilon)$-approximation.

\begin{restatable}[*]{theorem}{eptas} \label{thm:mwbs-eptas}
    There exists an algorithm that runs in time $2^{\Oh(1/\epsilon)} \cdot n^{\Oh(1)}$ and returns a $(1-\epsilon)$-approximate solution for the given instance of MWBS. That is, MWBS admits an EPTAS.
\end{restatable}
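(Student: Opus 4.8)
The plan is to implement Baker's shifting technique \cite{Baker94} exactly as outlined in the sketch preceding the theorem, and to pin down the two structural facts that make the combination step correct. Assume $G$ is connected (otherwise run the argument on each connected component and take the union, which is legitimate since bimodality is decided vertex by vertex and the optimum decomposes over components). First I would run a BFS from an arbitrary vertex in the underlying undirected graph and partition $V(G)$ into distance layers $L_0, L_1, \dots$. Setting $t = \lceil 1/\epsilon \rceil$, for each residue $i \in \{0, \dots, t-1\}$ I let $E^{(i)}$ be the set of edges joining $L_j$ to $L_{j+1}$ with $j \equiv i \pmod t$. Since every BFS edge stays within a layer or crosses between two consecutive layers, the sets $E^{(0)}, \dots, E^{(t-1)}$ partition all inter-layer edges; hence $\sum_{i} w(\mathrm{OPT} \cap E^{(i)}) \le w(\mathrm{OPT})$, and by averaging there is an index $i^\star$ with $w(\mathrm{OPT} \cap E^{(i^\star)}) \le \epsilon \cdot w(\mathrm{OPT})$.

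The crux rests on two observations. The first is that bimodality is hereditary: deleting edges from a bimodal plane digraph leaves it bimodal, because in the cyclic order around a vertex the incoming edges occupy one contiguous arc and the outgoing edges the other, and deleting edges preserves the contiguity of the surviving incoming (and of the surviving outgoing) edges. Hence $\mathrm{OPT} \setminus E^{(i^\star)}$ is itself a bimodal subgraph of $G - E^{(i^\star)}$ of weight at least $(1-\epsilon)\,w(\mathrm{OPT})$. The second is that deleting $E^{(i)}$ severs $G$ into strips, each induced by at most $t$ consecutive BFS layers; by the standard layering fact, a consequence of Bodlaender's diameter--treewidth bound \cite{Bodlaender98}, any $t$ consecutive layers induce a subgraph of treewidth, hence branchwidth, $\Oh(t) = \Oh(1/\epsilon)$. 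Since no vertex retains an incident edge crossing a strip boundary once $E^{(i)}$ is removed, an optimal bimodal subgraph of $G - E^{(i)}$ is obtained by solving each strip independently and taking the union, and this union is a genuine bimodal subgraph of $G$ in the inherited embedding.

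For each of the $t$ candidate indices $i$ I would therefore delete $E^{(i)}$, solve every resulting strip exactly with the single-exponential branchwidth algorithm of \Cref{thm:mwbs-fpt-bw} in time $2^{\Oh(1/\epsilon)} \cdot n^{\Oh(1)}$, glue the per-strip optima together, and finally return the heaviest of the $t$ glued solutions. Every candidate yields a valid bimodal subgraph of $G$, so the output is feasible; and for $i = i^\star$ its weight is at least $w(\mathrm{OPT} \setminus E^{(i^\star)}) \ge (1-\epsilon)\,w(\mathrm{OPT})$, so the returned answer is a $(1-\epsilon)$-approximation. The total running time is $t \cdot 2^{\Oh(1/\epsilon)} \cdot n^{\Oh(1)} = 2^{\Oh(1/\epsilon)} \cdot n^{\Oh(1)}$.

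I expect the main obstacle to be the clean justification that the glued solution is simultaneously feasible as a bimodal subgraph of the original $G$ and at least as heavy as $\mathrm{OPT} \setminus E^{(i^\star)}$. Feasibility is where the embedding must be handled with care: one must verify that solving strips under their inherited rotation system and then reuniting them does not create a non-bimodal vertex at a strip boundary. This is exactly what deleting $E^{(i)}$ guarantees, since every boundary vertex loses all of its cross-boundary edges, so its remaining rotation lies entirely within a single strip. The optimality bound then follows from the hereditary property together with the fact that each strip is solved exactly, so that the union dominates the restriction of any fixed global solution, in particular $\mathrm{OPT}$, to $G - E^{(i^\star)}$.
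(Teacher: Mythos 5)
Your proposal is correct and follows essentially the same route as the paper's own argument: Baker's shifting over BFS layers with $t = \lceil 1/\epsilon \rceil$ residue classes, an averaging step to find a cheap class $E^{(i^\star)}$, exact solution of each resulting bounded-branchwidth strip via \Cref{thm:mwbs-fpt-bw}, and gluing the per-strip optima. Your explicit verification of the two facts the paper leaves implicit --- that bimodality is hereditary under edge deletion, and that gluing is feasible because every boundary vertex loses all cross-strip edges --- only makes the combination step more rigorous than the paper's version.
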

We note that Baker's technique can also be used to obtain an EPTAS with the similar running for the \emph{minimization} variant of MWBS.
Although the high level idea is similar, the details 
are more cumbersome.

\medskip\noindent{\bf Final Remarks.}
%
We conclude by suggesting some open questions. One natural problem is to ask for a maximum $k$-modal subgraph for any given even integer $k \geq 2$; we believe that our ideas can be extended to this more general setting.
Another natural variant of MBS is to limit the number of edges that we can delete to get a bimodal subgraph by an integer $h$; in this setting, $h$ becomes another parameter in addition to those we have considered.
Finally, studying MBS in the variable embedding setting is an interesting future direction. 
%

\bibliographystyle{splncs04}
\bibliography{main}

\newpage
\appendix

\section{Appendix}

\subsection{Details for Section \ref{se:basic}}\label{se:app-basic}

We refer to the books~\cite{CyganFKLMPPS15,FominLSZ19} for an introduction to the parameterized complexity area. 
Formally, a \emph{parameterized problem} is a language $L\subseteq\Sigma^*\times \mathbb{N}$ where $\Sigma$ is a finite alphabet. Thus, an input of $L$ is a pair $(I,k)$
where $I$ is a string encoding the instance and $k\in\mathbb{N}$ is a \emph{parameter}. 
The computational complexity is measured as a function of $|I|$ and $k$. A problem $L$ is said to be \emph{fixed-parameter tractable} (FPT) if it can be solved in $f(k)\cdot |I|^{\Oh(1)}$ time for some function $f$. 

A \emph{kernelization} algorithm or \emph{kernel} for a parameterized problem $L$ is a polynomial-time algorithm that, given an instance $(I,k)$ of $L$, outputs an instance $(I',k')$ of the same problem such that (i) $(I,k)\in L$ if and only if $(I',k')\in L$ and (ii) $|I'|+k'\leq f(k)$ for a computable function $f$. The function $f$ is called the \emph{size} of the kernel; a kernel is polynomial of $f$ is a polynomial. Similarly, a \emph{compression} of $L$ into a (non-parameterized) problem $L'$ is a polynomial-time algorithm that for an instance $(I,k)$ of $L$, outputs an instance $I'$ of $L'$ such that (i) $(I,k)\in L$ if and only if $I'\in L$ and (ii) $|I'|\leq f(k)$ for a computable function $f$.


\subsection{Details for Section \ref{se:fpt-branchwidth}}\label{se:app-fpt-branchwidth}

\lemconfigsolutionbase*
\begin{proof}
    Since the inside of $\phi$ contains only $e$, testing whether non-deleting $e$ fulfills both $X_u$ and $X_v$ can be done in constant time. Knowing whether or not $e$ needs to be deleted gives us $G'$.
\end{proof}

\lemconfigsolutiondpstep*
\begin{proof}
    We will show that an optimal solution for $\phi_3$ with the given configuration set can be obtained from optimal solutions for $\phi_1$, $\phi_2$.
    First, we show that there exist configurations $X_{\phi_1}$ and $X_{\phi_2}$ such that the optimal subgraph $G'$ is optimal in $\phi_1$ and $\phi_2$ with regard to these configurations as well.
    
    Assume its not optimal in $\phi_1$.
    Since $G'$ is bimodal in $\phi_3$, we know that it has a configuration set $X_{\phi_1}$ for $\phi_1$, therefore, the restriction of $G'$ to the left side of $\phi_1$ is not an optimal solution under the condition that is bimodal in $\phi_1$ and has $X_{\phi_1}$ as configuration set.

    Now let $G_{\phi_1}$ be an optimal solution under these conditions, and let $G''$ be the subgraph, that coincides with $G_{\phi_1}$ on the inside of $\phi_1$ and with $G'$ on the outside.
    This solution has a lower cost then $G'$, since the cost on the inside of $\phi_1$ got lower, and the cost on the rest stays the same as in $G'$.

    To show that $G''$ is a valid solution, we need to show that it is bimodal in $\phi_3$ and has configuration $X_{\phi_3}$.
    To show that $G''$ is bimodal it is sufficient to consider those vertices that are cut by both $\phi_1, \phi_2$, since the bimodality of all other vertices inside of $\phi_3$ follows from the bimodality of $G'$ and $G_{\phi_1}$. 
    Let $v$ be a vertex cut by both $\phi_1$ and $\phi_2$. Let $X_v^1, X_v^2$ the configuration of $v$ regarding $\phi_1, \phi_2$ respectively. Since $v$ is bimodal in $G'$, $X_v^1$ and $X_v^2$ are compatible. Since compatible configurations imply bimodality and $v$ still has those configurations in $G''$, it is bimodal in $G''$.
    In order to show that $G''$ is still of configuration $X_{\phi_3}$, it suffices to consider the vertices $v$ that are cut by all of $\phi_1, \phi_2, \phi_3$, since the rest of the vertices on $\phi_3$ has the same configuration as in $G'$ and thus as in $X_{\phi_3}$. Let $v$ be such a vertex, let $X_v^1, X_v^2$ be the minimal of $v$ regarding $\phi_1, \phi_2$ respectively, and let $X_v$ be the configuration of $v$ regarding $\phi_3$. Since $G'$ is a valid solution, $X_v^1$ and  $X_v^2$ are compatible with respect to $X_v$. Since we did not change the configuration of $v$ in $G''$ regarding $\phi_1$, and the ability to form a configuration depends only on that, $v$ is of configuration $X_v$ for $\phi_3$ in $G''$ as well.
    This is a contradiction to the assumption that $G'$ was optimal.

    The same argument also shows that, given the right configuration sequence $X_{\phi_1}$ for $\phi_1$ and any arbitrary optimal solution $G_{\phi_1}$, there exists an optimal solution $G^*$ which is bimodal inside of $\phi_3$, has the required configuration sequence $X_{\phi_3}$ and coincides with $G_{\phi_1}$ on the inside of $\phi_3$.

    Since the situation is the symmetrical for $\phi_2$, we can find the optimal solution for the given configuration set $X_{\phi_3}$ by exhaustively trying all combinations of configuration sets for $\phi_1, \phi_2$ and computing the optimal value obtainable this way. Since $\phi_1, \phi_2$ have width at most $\ell$, and there are at most 6 configurations per vertex, the number of configurations is bounded by $6^{2\ell}$.
    Since we already know optimal solutions for all configuration sets of $\phi_1$, $\phi_2$, we can compute the optimal solution of every valid configuration set by computing the cost for $\phi_1$ and $\phi_2$, and taking the solution with the lowest overall cost, which requires only polynomial time. Hence, the overall running time is $\Oh(6^{2\ell})\cdot n^{\Oh(1)}$.
\end{proof}

\thmmwbsfptbw*
\begin{proof}
    Let $(G,w)$ be an instance of MWBS with $n$ vertices. Without loss of generality, $G$ is embedded in the sphere $\Sigma$. Let $\langle T, \xi, \Pi \rangle$ be a sphere-cut decomposition of $G$ with minimum width $\ell$, which can be computed in $\Oh(n^3)$ time.

    Without loss of generality, $G$ is connected (otherwise, the optimal solution can be computed for each connected component independently).
    If $G$ has branchwidth 1, then it is a disjoint union of stars~\cite{RobertsonS91}. For a star, we can compute the optimal bimodal subgraph in quadratic time by trying all combinations to switch from incoming to outgoing edge for $v$, all other vertices are already bimodal since they have only one incident edge.
    We assume therefore that $G$ has branchwidth $\ell > 1$, and know from \Cref{th:sphere-cut-with-deg-1} that we can obtain an optimal sphere-cut decomposition $\langle T, \xi, \Pi \rangle$ of $G$ in $\Oh(n^3)$ time. Let $\ell$ be the width of $\langle T, \xi, \Pi \rangle$.
    
    Let $r \in T$ be an arbitrary leaf of $T$ that we choose as a root, and let $e_r$ be the edge of $G$ associated with $r$. In this way we can assume that $T$ is rooted and directed, such that $r$ has outdegree 0 and every other node in $T$ has outdegree precisely 1. 
    We know that every arc $a \in E(T)$ corresponds to a noose $O$ that cuts precisely $\midset(a)$ in order $\pi_a$.
    Every noose $O$ bounds two closed discs in $\Sigma$, we define the inside of every $O$ to be the one that does not contain $e_r$.
    We now describe a dynamic program that computes an optimal solution for $(G,w)$.
    Let $O$ be a noose and let $X_O$ be a configuration set for $O$. Then we define $E_{(O, X_O)}$ to be an edge set of minimum weight, such that $G \setminus E_{(O, X_O)}$ is bimodal inside of $O$ and has the configuration set $X_O$ regarding $O$.
    We will now show how the entries $E_{(O, X_O)}$ can be computed bottom-up for every noose in the sphere-cut decomposition.
    
    We will have to start with the entries of leaves in $T$.
    Let $u \in V(T), u\neq r$ be a leaf of $T$, let $e = (v_1, v_2)$ be the edge in $G$ associated with $u$, let $a$ be the unique outgoing arc of $u$, and let $O_a$ be the noose corresponding to $a$.
    We can see that $O_a$ is a closed curve with $\midset(a) = \{v_1, v_2\}$, and the inside of $O_a$ contains precisely $e$.
    Now let $X_{O_a}$ be a configuration set for $O_a$. According to \Cref{lem:config_solution_base} we can compute $E_{(O_a, X_{O_a})}$ in constant time.
    Since $|\midset(a)| = 2$, we have precisely $6^2 = 36$ and thus a constant number of configuration sets for $O_a$, and can compute all entries for $O_a$ in $\Oh(1)$ time.

    Now let $u \in V(T)$ be an internal vertex of $T$, let $a$ be the unique outgoing arc of $u$, let $O_a$ be the noose corresponding to $a$. Let $E_{(O', X_{O'})}$ already be computed for every noose $O'$ corresponding to an arc in the sub-tree rooted by $u$ and for every configuration set $X_{O'}$ of $O'$.
    We know that $O_a$ is a closed curve with width at most $\ell$. We further know that the inside of $O_a$ is partitioned by two nooses that also have width at most $\ell$.
    Now let $X_{O_a}$ be a configuration set for $O_a$. According to \Cref{lem:config_solution_dpstep} we can compute $E_{(O_a, X_{O_a})}$ in $\Oh(6^{2\ell}) \cdot n^{\Oh(1)}$ time.
    Since $|\midset(a)| \leq \ell$, we have at most $6^{\ell}$ configuration sets for $O_a$, and can thus compute all entries for $O_a$ in $\Oh(6^{2\ell})\cdot n^{\Oh(1)}$ time.
    Since every arc of $T$ is either incoming for a leaf or an internal vertex of $T$, all table entries can be computed in this manner. Since $|E(T)| = \Oh(|E(G)|) = \Oh(|V(G)|)$, this can be done in $\Oh(6^{2\ell})\cdot n^{\Oh(1)}$ time.

    Recall that $e_r = (v_o, v_i)$ is the edge associated with the root $r$ of $T$. Let $a$ be the incident (incoming) arc of $r$ in $T$, and let $O_a$ be the noose associated with $a$.
    Assume that there exists an optimal solution $G'$ for MWBS$(G,w)$ such that $e_r \notin E(G')$, and let $E_\mathrm{opt} = E(G) \setminus E(G')$
    Then $G'$ is bimodal in $O_a$.
    On the other hand, every subgraph of $G$ that is bimodal inside of $O_a$ is an optimal solution for $(G,w)$. 
    Let $E^*_{\setminus e_r} = \min_{X_{O_a}}\{E_{(O_a, X_{O_a})}\}$, with $X_{O_a}$ being a configuration set of $O_a$.
    Since every subgraph of $G$ that is bimodal inside of $O_a$ has a configuration set for $O_a$, the graph $G^*_{\setminus e_r} = (V, E \setminus E^*_{\setminus e_r})$ is an optimal solution for MWBS$(G,w)$.
    Since all table-entries are known, and there are at most $36$ configuration sets for $O_a$, we can compute $E^*_{\setminus e_r}$ and thus $G^*_{\setminus e_r}$ in linear time.

    Now assume that there exists an optimal solution $G'$ for $(G,w)$ with $e_r \in E(G')$, and let $E_\mathrm{opt} = E(G) \setminus E(G')$.
    Then $G'$ is bimodal in $O_a$, has configuration $(i, o, i)$ in $v_i$, and has configuration $(o, i, o)$ in $v_o$.
    On the other hand, every subgraph of $G$ bimodal inside of $O_a$, containing $e_r$, with configuration $(i, o, i)$ in $v_i$ and configuration $(o, i, o)$ in $v_o$ regarding $O_a$ is an optimal solution for MWBS$(G,w)$.
    Set the configuration set $X_{O_a} = (X_{v_i} = (i, o, i), X_{v_o} = (o, i, o))$. We then know that $E^*_{e_r} = E_{(O_a, X_{O_a})}$ is an edge set of minimal cost such that $G^*_{e_r} = (V(G), E(G)\setminus E^*_{\setminus e_r})$ is an optimal solution for MWBS$(G,w)$.
    Since all table-entries are known, we already have $E^*_{\setminus e_r}$ and can thus compute $G^*_{\setminus e_r}$ in linear time.

    Since one of these two cases must be fulfilled in any optimal solution, one of $G^*_{\setminus e_r}$ and $G^*_{e_r}$ must be an optimal solution for MWBS$(G,w)$, and we can find it in linear time.
    
    Since we could compute a sphere-cut decomposition $\langle T, \xi, \Pi \rangle$ in $\Oh(n^3)$ time, since all table-entries $E_{(O, X_O)}$ could be computed in $\Oh(6^{2\ell})\cdot n^{\Oh(1)}$ time, and since $G^*_{e_r}$ and $G^*_{e_r}$ could be computed in linear time, the overall running time of the algorithm is $2^{\Oh(\ell)} \cdot n^{\Oh(1)}$.
\end{proof}

\subsection{Details for Section \ref{se:fpt-b}}\label{se:app-fpt-b}

\cleasysoundnesscomp*
\begin{claimproof}
The soundness of Reduction rule \ref{redrule:isolated-comp} is straightforward. 

Now consider Reduction rule \ref{redrule:bimodal-comp}. Consider an edge $(u, v)$ between two bimodal vertices $u$ and $v$. Suppose that  $G^*$ a bimodal subgraph of $G$ of weight at least $W$. Then $\hat{G}=G^*-(u,v)$ is a bimodal subgraph of $G'$ and $w(E(\hat{G}))\geq W-w(u,v)=W'$. 
In the reverse direction, let $\hat{G}$ be a bimodal subgraph of $G'$ of weight at least $W'$. Without loss of generality we can assume that $V(\hat{G})=V(G)$. Consider $G^*$ obtained from $\hat{G}$ by the addition of $(u,v)$. We have that  $G^*$ is bimodal because $u$ and $v$ are bimodal in $G$. Thus, $G^*$ is a bimodal subgraph of $G$ of weight at least $W=W'+w(u,v)$.

Finally, we show that Reduction rule~\ref{redrule:copies-comp} is sound. For this, we assume that $v\in\mathcal{G}(G)$ and $G'$ is obtained from $G$ by the replacement of $(u,v)\in E(G)$ by $(u,x_{xy})$ for all $u \in V$ that are adjacent to $v$, where $x_{uv}$ is the vertex constructed for $(u,v)$. 

Let $G^*$ be a bimodal subgraph of $G$ of maximum weight, and let $u\in V$, $u$ adjacent to $v$. If $(u,v)\notin E(G^*)$, we define $\hat{G}$ to be the spanning subgraph of $G'$ with $E(\hat{G})=E(G^*)$ and observe that $\hat{G}$ is a bimodal subgraph of $G$ because $x_{uv}$ is an isolated vertex of $\hat{G}$. Trivially, $w(E(\hat{G}))\geq w(E(G^*))$. 
Otherwise, if $(u,v)\in E(G^*)$, we construct the spanning subgraph $\hat{G}$ of $G'$ by setting 
$E(\hat{G})=(E(G^*)\setminus\{(u,v)\})\cup\{(u,x_{uv})\}$. 
Notice that $w(E(\hat{G}))\geq w(E(G))$. Furthermore, $x_{uv}$ is bimodal because it is incident only to $(u,x_{uv})$. This implies that $\hat{G}$ is a bimodal subgraph of $G'$ whose weight is at least the weight of $G$. 
 
For the opposite direction, assume that $\hat{G}$ is a bimodal subgraph 
of $G'$ of maximum weight. If $(u,x_{uv})\notin E(\hat{G})$, we set $G^*$ be the spanning subgraph of $G$ with $E(G^*)=E(\hat{G})$. Then $G^*$ is bimodal and $w(E(G^*))\geq w(E(\hat{G}))$. Suppose that $(u,x_{uv})\in E(\hat{G})$. Then we define $G^*$ to be the spanning subgraph of $G$ with $E(G^*)=(E(\hat{G})\setminus\{(u,x_{uv})\})\cup \{(u,v)\}$.
By definition, $w(E(G^*))=w(E(G'))$. Because $v$ is a bimodal vertex of $G$, we have that $v$ is bimodal in $G^*$. Therefore, $G^*$ is a bimodal subgraph of $G$ whose weight is at least the weight of $\hat{G}$. Thus, Reduction rule~\ref{redrule:copies-comp} is sound.
This concludes the proof.
\end{claimproof}


\lemreductiontosimpleinstance*
\begin{proof}
	Let $(G', w)$ be the instance of MWBS obtained by applying reductions \ref{redrule:isolated-comp}, \ref{redrule:bimodal-comp} and \ref{redrule:copies-comp} exhaustively.
	Since none of them introduces new bad vertices, (i) is fulfilled.
	Let $v\in {\cal G}(G')$. If $v$ were adjacent to another good vertex, we could apply reduction \ref{redrule:bimodal-comp}. Thus, (ii) is fulfilled.
	If $deg(v) > 1$, reduction \ref{redrule:copies-comp} could be applied. If $deg(v) = 0$, reduction \ref{redrule:isolated-comp} could be applied. Thus, (iii) is fulfilled.	
    It is left to show that the number of reduction steps is bounded in a polynomial. Reduction \ref{redrule:isolated-comp} is performed at most once for every already existing vertex and at most once for every vertex introduced by an iteration of reduction \ref{redrule:copies-comp}. Since the number of new vertices introduced over all iterations of reduction \ref{redrule:copies-comp} is bounded in $2\cdot |E(G)|$, reduction \ref{redrule:isolated-comp} is performed at most $|V(G)| + 2|E(G)|$ times.
    Reduction \ref{redrule:bimodal-comp} is performed at most $|E|$ times, since it always deletes an edge, and the number of edges stays the same during the other reductions.
    Reduction \ref{redrule:copies-comp} is performed at most once per preexisting vertex and never for any new vertex introduced during an iteration of it, so it is performed at most $|V(G)|$ times. 
    Thus, the total number of reductions steps is linear in the size of $G$, and the algorithm runs in polynomial time.
\end{proof}

\lemtransformsectionstocutsets*
\begin{proof}
	Since $S$ is a good edge-section, and all good incident vertices of $S$ have degree 1, we can draw a curve $\phi$ into $G'$ such that $\phi$ cuts $G$ only in $v$, and has precisely $S$ and the respective incident good vertices of $S$ in its interior.
	
	\begin{cl}\label{claim:config-for-v-solvable}
		Given a configuration $X$, we can in polynomial time compute a subset $S_X$ of $S$ of minimal cost, such that $v$ has configuration $X$ in $\phi$ if $S_X$ is removed.
	\end{cl}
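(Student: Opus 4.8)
The plan is to establish \Cref{claim:config-for-v-solvable} by directly leveraging \Cref{lem:config_solution_base} and \Cref{lem:config_solution_dpstep}, which already give us the machinery to compute optimal bimodal subgraphs inside a noose for a prescribed configuration set. First I would note that the curve $\phi$ constructed in the proof cuts $G'$ only at $v$, so $\midset(\phi) = \{v\}$, and the only vertices strictly inside $\phi$ are $v$'s good neighbors along $S$, each of degree exactly $1$. This means that for each such neighbor, the only decision is whether to keep or delete its single incident edge, and since these vertices are automatically bimodal regardless, no bimodality constraint is imposed on the interior. The entire constraint therefore reduces to forcing $v$ to realize the single prescribed configuration $X$ on the edge-set $E^{v,\phi} = S$.

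The heart of the argument is that for a fixed configuration $X \in C$ (recall $|C| = 6$), computing the minimum-weight subset $S_X \subseteq S$ to delete so that the retained edges of $S$ realize configuration $X$ around $v$ is a simple optimization over the \emph{switch positions}. By \Cref{def:configuration}, realizing $X$ means partitioning the retained edges of $S$ into at most $|X| \le 3$ consecutive blocks that alternate between in-edges and out-edges in the prescribed pattern. Since $S$ is a linearly ordered (within the good edge-section) sequence of in- and out-wedges, I would show that choosing which edges to keep amounts to selecting at most two ``switch points'' in this linear order (one for each transition between consecutive letters of $X$); every edge on the wrong side of its assigned block — i.e., an out-edge lying in a block designated as in, or vice versa — must be deleted, while all correctly-oriented edges are kept for free. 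The minimum-cost choice is then obtained by iterating over the $\Oh(|S|^2)$ possible placements of the (at most two) switch points, summing the weights of the edges that must be deleted for each placement, and taking the minimum; this is clearly polynomial.

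Alternatively, and more cleanly, I would simply invoke \Cref{lem:config_solution_base} and \Cref{lem:config_solution_dpstep} as a black box: the noose $\phi$ together with its interior edges $S$ forms a sub-instance of width $1$ (since $|\midset(\phi)| = 1$), and by composing the trivial single-edge nooses for each edge of $S$ via \Cref{lem:config_solution_dpstep}, we obtain, for the prescribed configuration set $X_\phi = \{X\}$ on the singleton middle set, the minimum-weight deletion set realizing $X$. Because $\ell = 1$ here, the factor $6^{2\ell}$ is constant, so the running time is $n^{\Oh(1)}$, giving the claimed polynomial bound. The resulting optimal deletion set is exactly the desired $S_X$.

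The main obstacle I anticipate is bookkeeping the interaction between the linear order of edges within $S$ and the \emph{cyclic} order around $v$ that governs $v$'s global bimodality. Since $S$ is only a single good edge-section and not all of $v$'s incidences, realizing a configuration $X$ inside $\phi$ does not immediately determine whether $v$ is bimodal in $G'$ as a whole — that is precisely why the configuration formalism (tracking a prescribed $X$ on one side of a cut) is needed rather than a direct bimodality check. I would therefore be careful to state the claim purely as ``$v$ has configuration $X$ in $\phi$'' (a local, side-relative property) rather than ``$v$ is bimodal,'' so that the claim composes correctly with the later partitioning of $S$ into the sets $S_1, \dots, S_{26}$ in the surrounding proof of \Cref{lem:transform-sections-to-cutsets}.
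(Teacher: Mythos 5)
Your first, switch-point argument is exactly the paper's proof of \Cref{claim:config-for-v-solvable}: a configuration $X$ admits at most two switches between incoming and outgoing blocks, so one enumerates the $\Oh(|S|^2)$ placements of these switch points along $S$, deletes the wrongly-oriented edges in each block, and takes the minimum-weight outcome. The alternative ``black-box'' route via \Cref{lem:config_solution_base} and \Cref{lem:config_solution_dpstep} is not what the paper does and would additionally require verifying that the intermediate curves enclosing prefixes of $S$ are genuine nooses (intersecting $G$ only at vertices and traversing each face at most once), but since you offer it only as an alternative, your main argument matches the paper and stands on its own.
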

	\begin{claimproof}
		Edge-deletion till fulfillment of a configuration is equivalent to finding the (at most two) switches from a group of incoming edges to a group of outgoing edges and vice versa.
		Since there are at most two switches, the number of possibilities is bounded by $\Oh(n^2)$.
		Once the placement of the switches is known, the corresponding costs are equivalent to the weight of the set of edges that are in the section of opposite type (incoming instead of outgoing or vice versa).
		$S_X$ is chosen as the set of edges with minimal cost obtained this way.
	\end{claimproof}

	Now let $G'$ be an optimal solution for $(G,w)$. We know that $v$ has a configuration in $G'$ regarding $\phi$, let $X$ be the minimal configuration for which this is true. Since bimodality of $v$ in $G'$ does not change if we change the inside of $\phi$ as long as $v$ keeps configuration $X$, we can construct another solution $G^*$ for $(G,w)$ that coincides with $G'$ on the outside of $\phi$ and with $G\setminus S_X$ on the inside of $\phi$. Since $S_X$ had the minimal cost for configuration $X$, the cost of $G^*$ is not bigger than the cost of $G'$, thus $G^*$ is an optimal solution for $(G,w)$.
	
	There are at most 6 configurations possible for $v$ regarding $\phi$, and every configuration is associated with at most 2 switches, therefore we can separate $S$ in 12 places and thus partition it in into 13 sets $S^1, \dots, S^{13}$, such that there exists an optimal solution $G^*$ for which no switch from incoming to outgoing edges (or vice versa) happens inside of a $S^i$.
	Now let $S^i$ be such a set, and let $\mathrm{In}(S^i)$ and $\mathrm{Out}(S^i)$ denote the incoming and outgoing edges of $S^i$, respectively.
	Since no switch happens inside of $S^i$, $\mathrm{In}(S^i)$ is either contained or removed completely in $G^*$. The same is true for $\mathrm{Out}(S^i)$.
	This gives us a partition $\{\mathrm{In}(S^i), \mathrm{Out}(S^i) \mid 1\leq i \leq 13\}$ of $S$ with at most 26 sets that has the wanted properties.

    We get the partition $P_1, \dots, P_j$ of $\{\mathrm{In}(S^i), \mathrm{Out}(S^i) \mid 1\leq i \leq 13\}$, if we set $P_i = \{\mathrm{In}(S^i), \mathrm{Out}(S^i)\}$ and then remove all empty sets.
\end{proof}

\lemtransformMWBStoCutMWBS*
\begin{proof}
	According to \Cref{lem:reduction-to-simple-instance}, we can find in polynomial time an equivalent instance $(G', w)$, such that $|{\cal B}(G')| \leq b$ (which already impies (ii)); ${\cal G}(G')$ being an independent set in $G'$, and for all $v \in {\cal G}(G')$, $\deg(v) = 1$. We will not change the graph further, this already implies (ii)
	We are now able to apply \Cref{lem:transform-sections-to-cutsets} to $(G', w)$.
	For every $v \in {\cal B}(G')$ and every good edge-section $S$ of $v$, let $\mathcal{S}_{v, S}$ denote a partition of $S$ as described in \Cref{lem:transform-sections-to-cutsets}.
	
	Let $E_\mathrm{rest}$ denote the set of all edges that are not contained in some $\mathcal{S}_{v, S}$.
	We define $\mathcal{E} = \{\cup_{v\in {\cal B}(G'), S} \mathcal{S}_{v, S}\} \cup \{\{e\}\mid e\in E_\mathrm{rest}\}$.
	Since every edge can be contained in at most one $\mathcal{S}_{v, S}$, $\mathcal{E}$ is a partition of $E(G')$.
	Thus, $(G',w,\mathcal{E})$ is an instance of Cut-MWBS.
	
	We show that $(G',w,\mathcal{E})$ has the described properties.
	Clearly, every solution for $(G',w,\mathcal{E})$ is one for $(G',w)$ as well. The other way around, we can find an optimal solution for $(G',w)$ that for any given $E_i \in\mathcal{E}$ either removes all edges in $E_i$ or none by iteratively applying \Cref{lem:transform-sections-to-cutsets}. This is an optimal solution for $(G',w,\mathcal{E})$ as well, and (iv) is fulfilled.
	
	Let $\mathcal{E}_i \subseteq \mathcal{E}$ with $|\mathcal{E}_i| \geq 2$. Then $\mathcal{E}_i \in \mathcal{S}_{v, S}$ for some $v \in {\cal B}(G')$ and some good edge-section $S$ of $v$, with only incoming or only outgoing edges of $v$ by construction, and thus (i) is fulfilled.
	
	Now let $e$ be an edge incident to a bad vertex $v$ and to some good vertex $u$. By definition, $e$ is contained in some good edge-section.
	\Cref{obs:bounded-sections} gives us that there exist at most $b \cdot (b-1)$ good edge-sections in total. Since every good edge-section got partitioned into at most 26 sets, we know that $|\{\cup_{v\in {\cal B}(G'), S} \mathcal{S}_{v, S}\}| = \Oh(b^2)$.
	Since the number of edges incident to two bad vertices is bounded by $\Oh(b)$, $|\mathcal{E}| = \Oh(b^2)$ as well and (iii) is fulfilled.
	
	Since we obtain every set $\mathcal{E}_i$ with more than 1 vertex out of some set $S^i$ of consecutive edges, and we take all edges of a specific type, (v) till (vii) are fulfilled by the properties of the partition  $P_1, \dots, P_j$ of $\{S_1, \dots, S_26\}$ of \Cref{lem:transform-sections-to-cutsets}.
\end{proof}

\claimsoundnesofcutsetreducings*
\begin{claimproof}
    If $\mathcal{E}_i \subseteq S$ is a consecutive set of edges in $S$, it clearly suffices to choose one representative for a consecutive set of edges $\mathcal{E}_i$ in $S$ that are all part of the same in-wedge or out-wedge, and reduction \ref{redrule:melt-adjacent-edges} is sound.
	
	To show soundness of \ref{redrule:simplify-cutsets}, consider an optimal deletion set $E_\mathrm{opt}$ for $(G,w,\mathcal{E})$, in which no switch of $v$ between incoming and outgoing vertices or vice versa happens in the consecutive edge sequence of $\mathcal{E}_\mathrm{in} \cup \mathcal{E}_\mathrm{out}$.
	Thus, $\mathcal{E}_\mathrm{in}$ is deleted in the optimal solution if and only if $\mathcal{E}_\mathrm{out}$ is not deleted in the optimal solution.
	Let without loss of generality $\mathcal{E}_\mathrm{in} \in E_\mathrm{opt}$. Thus, $E'_\mathrm{opt} = E_\mathrm{opt}\setminus \mathcal{E}_\mathrm{in} \cup \{e_1, e_3\}$ has the same cost as $E_\mathrm{opt}$, and it is a valid solution deletion set for $(G',w',\mathcal{E})$ since the outgoing vertices $e_2, e_4$ do not violate bimodality.
	We can obtain an optimal solution for $(G,w,\mathcal{E})$ out of one for $(G',w',\mathcal{E})$ in the same manner.
\end{claimproof}

\lemboundnumberofedgesinCutMWBS*
\begin{proof}
	Apply Reductions \ref{redrule:melt-adjacent-edges} and \ref{redrule:simplify-cutsets} exhaustively to $(G,w,\mathcal{E})$ in order to obtain $(G',w',\mathcal{E}')$.
	Then, $|\mathcal{E'}_i| \leq 2$ for all $\mathcal{E'}_i \in \mathcal{E'}$. Since $\mathcal{E'}$ is a partition of $E(G')$, we have $|E(G')| \leq 2|\mathcal{E'}| = |\mathcal{E}| = \Oh(b^2)$. Since none of the reduction rules creates isolated vertices, size of $V(G)$ is bounded by $\Oh(b^2)$ as well.
\end{proof}

\thmweightedcompressionb*
\begin{proof}
	We first apply \Cref{lem:transform-MWBS-to-CutMWBS} and then \Cref{lem:bound-number-of-edges-in-CutMWBS} to obtain an instance $(G,w,\mathcal{E})$ of Cut-MWBS with $|V(G)| = \Oh(b^2)$ and the same optimal cost as MWBS$(G,w)$.
    Let $r = |E(G)| + 1 = \Oh(b^2)$, and let $\omega = (w(e_1), w(e_2), \ldots, w(e_m), W) \in \mathbb{Q}^{r}$, where the edges in $G$ are indexed in an arbitrary order.
	Using \Cref{thm:franktardos}, we obtain a vector $\bar{\omega}$ such that $\text{sign}(\omega \cdot x) = \text{sign}(\bar{\omega} \cdot x)$ for each $x \in \mathbb{Z}^{r}$ with $||x||_1 \le r-1$.
	Now consider an edge set $F \subseteq E(G)$ upon whose deletion we obtain an optimal solution for $(G,w,\mathcal{E})$, and consider the vector $x_F \in \{-1, 0, 1\}^{r}$ such that 
 
    $$x_F(e_i) = \begin{cases}
    1 &\text{ if } e_i \in F,
    \\0 &\text{ if } e_i \not\in F
    \end{cases}$$
    and the last ($r$st) coordinate of $b_F$ is equal to $-1$. Then, note that $\omega \cdot x_F = \sum_{e_i \in F} w(e_i) - W$, which means that $F$ is a feasible solution if and only if $\omega \cdot x \le 0$, i.e., $\text{sign}(\omega \cdot x_F) \neq 1$. Since the new vector $\bar{\omega}$ preserves the sign of inner product with all such vectors $x_F$, it follows that the (in-)feasibility of all weighted solutions is preserved w.r.t.~the new weight vector $\bar{\omega}$. Thus, the new weights of the edges are given by the respective entries in $\bar{\omega}$, and the new target weight $\bar{W'}$ is given by the last entry of $\bar{\omega}$. Since each entry in $\bar{\omega}$ is an integer whose absolute value is bounded by $2^{\Oh(b^6)}$, it follows that we need $ \Oh(b^8)$ bits to encode $\bar{\omega}$.
\end{proof}

\thmkernelMWBS*
\begin{proof}
	We can compute an $(G',w',\mathcal{E})$ of Cut-MWBS with size $\Oh(b^8)$, and a new target value $W'$ with size $\Oh(b^6)$ in polynomial time, such that there exists an solution for $(G,w)$ of cost $W$ if and only if there exists a solution for $(G',w',\mathcal{E})$ of cost $W'$ according to \Cref{thm:weightedcompression-b}.
	Since the decision versions of MWBS and Cut-MWBS are both NP-complete, there exists a polynomial-time reduction from the decision version of Cut-MWBS onto the decision version of MWBS (see e.g. \cite[Theorem~1.6]{FominLSZ19}).
	Using this reduction, we can compute a new instance $(G^*, w^*)$ of MWBS and some $W^*$ such that there exists a solution for $(G^*, w^*)$ of cost at most $W^*$ if and only if there exists a solution for  $(G',w',\mathcal{E})$ of cost at most $W'$, such that the size of $(G^*, w^*)$ is bounded in a polynomial of the size of  $(G',w',\mathcal{E})$ and thus in a polynomial of $b$.
\end{proof}

\section{EPTAS for MWBS} \label{sec:appendix-eptas}

In this section, we use Baker's technique \cite{Baker94} to prove the following theorem.
\eptas*
\begin{proof}

    For a directed graph $H$, let $\overline{H}$ denote its undirected version. Let $(G, w)$ be the given instance of MWBS. Let $E^* \subseteq E(G)$ denote an optimal solution, i.e., a maximum-weight subset of edges such that $G^* = G[E^*]$ is bimodal. Let $OPT = w(E^*)$ denote the weight of the solution. Without loss of generality, we assume $\gbar = (V, F)$ is connected -- otherwise we can use the following algorithm on each connected component separately. 

    Let $t = \lceil 1/\epsilon\rceil$. Fix an arbitrary vertex $v \in V(\gbar)$. For any $i \ge 0$, let $L_i$ denote the set of vertices that are at distance exactly $i$ from $v$ in $\gbar$. Note that any $u \in L_i$ has neighbors in $L_j$ such that $|i-j| \le 1$. For $i \ge 0$, let $E^{(i, i+1)} \subseteq E(G)$ denote the set of directed edges with one endpoint in $L_i$ and other endpoint in $L_{i+1}$. For each $E^{(i, i+1)}$, let $F^{(i, i+1)}$ denote the corresponding set of undirected edges.
    \begin{observation} \label{obs:averaging}
    For some $i \in \LR{0, 1, \ldots, t-1}$, it holds that $w(E^{(i, i+1)} \cap E^*) \le \frac{1}{t} \cdot OPT$.
    \end{observation}
    \begin{proof}
        Follows from the fact that the sets $E^{(i, i+1)}$ and $E^{(j, j+1)}$ are pairwise disjoint for distinct $i, j$. 
    \end{proof}

    For each $0 \le i \le t-1$, let $G_i$ denote the graph $G - E^{(i, i+1)}$. Note that each connected component in $\overline{G_i}$ is induced by vertices belonging to at most $t$ consecutive layers, and hence, using standard arguments (e.g., \cite{CyganFKLMPPS15}), the treewidth of $\overline{G_i}$ is bounded by $O(t) = O(1/\epsilon)$. Our algorithm performs the following operations for each $0 \le i \le t-1$. We solve MWBS on each connected component of $G_i$ separately, in time $2^{O(1/\epsilon)} \cdot n^{O(1)}$, using the algorithm from \Cref{thm:mwbs-fpt-bw}. Let $G'_i$ denote the graph obtained by taking the disjoint union of the solutions for all subgraphs. It is easy to see that the resulting graph is bimodal. Finally, for the correct value of $i$, i.e., that guaranteed by \Cref{obs:averaging}, we know that the total weight of edges in $|E^{(i, i+1)} \cap E^*| \le \epsilon \cdot OPT$; whereas the edges in $E^{*(i, i+1)} \setminus E^*$ do not contribute to $OPT$. It follows that the total weight of the edges in $G'_i$ is at least $(1-\epsilon) \cdot OPT$.
\end{proof}

Note that here we designed an approximation version for the maximization objective of MWBS, i.e., where we want to maximize the total weight of the bimodal subgraph. We can also consider the minimization objective, where the goal is to minimize the total weight of the edges removed to obtain a bimodal subgraph. Note that although the decision versions of the two objectives are equivalent, approximation for one version does not necessarily imply a good approximation for the other. Nevertheless, we can adapt Baker's technique to design an EPTAS, i.e., $(1+\epsilon)$-approximation for the minimization objective as well. However, a formal description of this algorithm is quite tedious, since simply deleting the edges of $E^{(i, i+1)}$ as in above, is not sufficient. Instead, we need to ``copy'' the edges of $E^{(i, i+1)}$ to subproblems on ``both sides'' in an appropriate manner. In the following theorem, we give a formal description of the EPTAS for the minimization variant.

\begin{theorem} \label{thm:mwbs-eptas-min}
    There exists an $(1+\epsilon)$-approximation for the minimization variant of MWBS that runs in time $2^{O(1/\epsilon)} \cdot n^{O(1)}$. That is, the minimization variant of MWBS admits an EPTAS.
\end{theorem}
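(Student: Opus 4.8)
The plan is to adapt Baker's shifting technique exactly as in the maximization proof, but with the modification hinted at in the remark: instead of \emph{deleting} the edges that cross a chosen layer boundary — which, for the minimization objective, would force us to pay $w(E^{(i,i+1)})$ even when the optimum keeps most of those edges — I would \emph{duplicate} each crossing edge into the two subproblems it borders. The linchpin making this work is a simple monotonicity fact that I would prove first: bimodality is preserved under edge deletion. Indeed, removing one element from the cyclic sequence of in/out labels around a vertex can only merge or shorten the maximal monochromatic runs (wedges), never create a new one; hence if a vertex is bimodal with respect to a set of incident edges, it stays bimodal after deleting any subset. This means the \emph{union} of several deletion sets that each individually fix a given vertex still fixes that vertex.

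Next I would set up the layering as in the maximization argument: run BFS from an arbitrary vertex of the (connected) underlying graph, obtain layers $L_0, L_1, \dots$, put $t=\lceil 1/\epsilon\rceil$, and for a shift $i\in\{0,\dots,t-1\}$ cut at every boundary whose lower index is $\equiv i \pmod t$, so that the crossing edges are exactly $E^{(i,i+1)}$ and the remaining blocks consist of $t$ consecutive layers. For each block $B_k$ I would build a \emph{self-contained} instance: keep all edges internal to the $t$ layers, and for every crossing edge $e=(u,v)$ with $u\in B_k$, $v\in B_{k+1}$, attach to $u$ a pendant copy $e^k$ (to a fresh degree-$1$ vertex) in the same rotational position and with the same orientation as $e$, and symmetrically attach $e^{k+1}$ to $v$ in $B_{k+1}$; set $w(e^k)=w(e^{k+1})=w(e)$. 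Now \emph{every} vertex of $B_k$ sees all of its incident edges, so I can simply run MWBS on $B_k$ via \Cref{thm:mwbs-fpt-bw} (maximizing retained weight within the block equals minimizing deleted weight within the block). Each block spans $t$ consecutive layers plus degree-$1$ pendants, so its branchwidth is $\Oh(t)=\Oh(1/\epsilon)$ and each block is solved in $2^{\Oh(1/\epsilon)}\cdot n^{\Oh(1)}$ time; repeating over all blocks and all $t$ shifts stays within $2^{\Oh(1/\epsilon)}\cdot n^{\Oh(1)}$.

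To combine, let $D_k$ be the deletion set computed for $B_k$; I would retain a crossing edge $e$ globally iff \emph{both} copies $e^k,e^{k+1}$ are retained, and take the union of all internal deletions. Feasibility follows from the monotonicity fact: each vertex $u$ lies in a unique block $B_k$ where it is made bimodal, and globally $u$ can only lose further edges (when the neighbouring block deletes its copy of a crossing edge), so $u$ remains bimodal. For the cost bound, I would show $w(D_k)\le w(\hat D_k)$ where $\hat D_k$ is the restriction of a fixed optimum deletion set $D^*$ to the internal edges of $B_k$ together with the copies of the crossing edges of $D^*$ on $B_k$'s boundary; $\hat D_k$ is feasible for $B_k$ because the rotation at each vertex is reproduced faithfully by the pendants. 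Summing over $k$, internal edges of $D^*$ are counted once while crossing edges are counted twice, giving $\sum_k w(D_k)\le w(D^*)+w(D^*\cap E^{(i,i+1)})$, and the global cost is at most $\sum_k w(D_k)$. Since the sets $E^{(i,i+1)}$ are pairwise disjoint over $i=0,\dots,t-1$, an averaging argument yields a shift $i$ with $w(D^*\cap E^{(i,i+1)})\le \tfrac1t\,w(D^*)\le \epsilon\cdot \mathrm{OPT}$, so the returned solution has deletion cost at most $(1+\epsilon)\,\mathrm{OPT}$.

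I expect the main obstacle to be the combination step rather than the averaging: one has to argue cleanly that gluing independently-computed block solutions along the duplicated crossing edges yields a globally bimodal subgraph without any further repair, and that the cost of the glued solution is truly bounded by $\sum_k w(D_k)$ (crossing edges deleted on both sides are paid twice in the block sum but only once globally, which is safe, whereas the reverse inequality is what needs the duplication and the faithful placement of pendants). This is precisely where the monotonicity-under-deletion property and the requirement that each copy preserve the orientation and cyclic position of its original edge do the heavy lifting; getting those bookkeeping details right is the tedium alluded to in the remark.
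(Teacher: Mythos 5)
Your proposal is correct and takes essentially the same route as the paper's own proof: Baker's shifting over BFS layers, duplicating each boundary-crossing edge as a rotation- and orientation-preserving pendant copy in both adjacent blocks, solving each bounded-branchwidth block exactly via \Cref{thm:mwbs-fpt-bw}, and gluing by deleting a crossing edge whenever either copy is deleted, with feasibility resting on the monotonicity of bimodality under edge deletion. Your cost accounting is in fact slightly tighter (supporting $t=\lceil 1/\epsilon\rceil$ where the paper uses $t=\lceil 2/\epsilon\rceil$), but the construction, combination step, and averaging argument coincide with the paper's.
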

\begin{proof}
    Let $t = \lceil 2/\epsilon\rceil$. Fix an arbitrary vertex $v \in V(\gbar)$. For any $i \ge 0$, let $L_i$ denote the set of vertices that are at distance exactly $i$ from $v$ in $\gbar$. Note that any $u \in L_i$ has neighbors in $L_j$ such that $|i-j| \le 1$. For $i \ge 0$, let $E^{(i, i+1)} \subseteq E(G)$ denote the set of directed edges that with one endpoint in $L_i$ and other endpoint in $L_{i+1}$. For each $E^{(i, i+1)}$, let $F^{(i, i+1)}$ denote the corresponding set of undirected edges.
    \begin{observation} \label{obs:averaging}
    For some $i \in \LR{0, 1, \ldots, t-1}$, it holds that $w(E^{(i, i+1)} \cap E^*) \le \frac{1}{t} \cdot OPT$.
    \end{observation}
    \begin{proof}
        Follows from the fact that the sets $E^{(i, i+1)}$ and $E^{(j, j+1)}$ are pairwise disjoint for distinct $i, j$. 
    \end{proof}

    \begin{figure}
        \centering
        \includegraphics[scale=0.8]{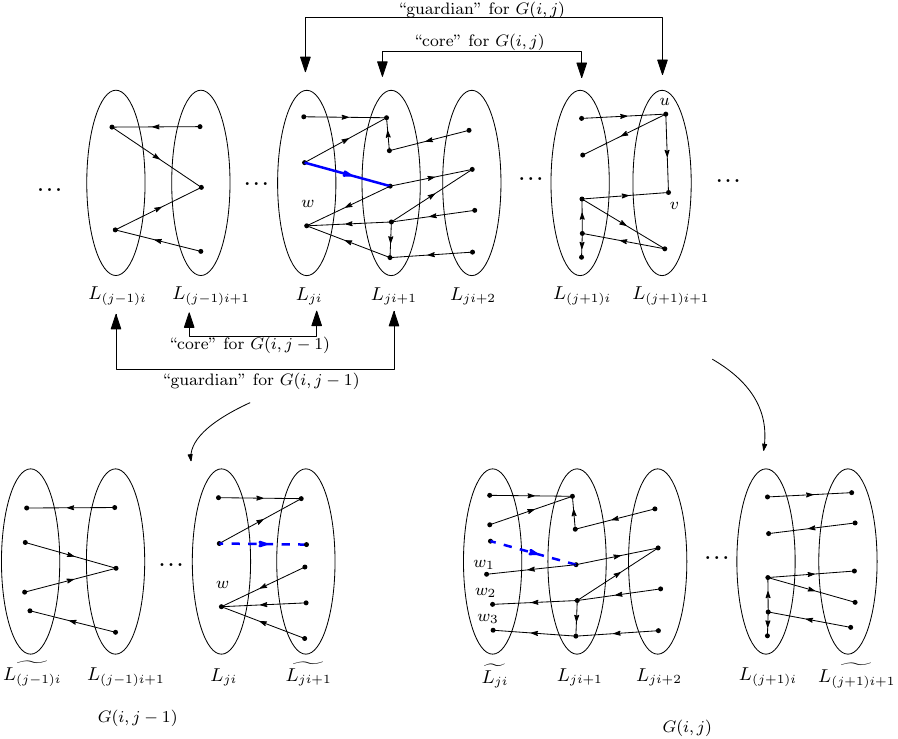}
        \caption{Construction of graphs $G(i, j-1)$ and $G(i, j)$ using layers. Observe that the vertex $w \in L_{ji}$ is split into multiple copies in $\widetilde{L_{ji}}$, and the edge $(u, v)$ where $u, v \in L_{(j+1)i+1}$ is not retained in $G(i, j)$ after making copies of the respective vertices. However, $w$ is kept as ``original'' in $L_{ji}$ in the graph $G(i, j)$. Finally, note that the blue edges in the original graph belongs to $E^{(i, i+1)}$, and corresponds to two copies -- one in $G(i, j-1)$ and the other in $G(i, j)$. This is why we incur the additive factor $2 \cdot 1/t \cdot OPT$ when combining the solutions.}
        \label{fig:my_label}
    \end{figure}

    For any integer $i \in \{0, 1, \ldots, t-1\}$, we construct a graph $G_i$ by taking a disjoint union of graphs $G(i, j)$ for $j \ge 0$, constructed as follows. 

    For $j = 0$, the graph $G(i, 0)$ is defined on vertex set $L_0 \cup L_1 \cup \ldots L_i$, along with a special vertex set $\tilde{L_{i+1}}$ defined later. All edges $e = (u, v)$, with $u, v \in L_0 \cup L_1 \cup \ldots \cup L_{i}$ are present in $G(i, 0)$. Next, consider edges with exactly one endpoint in $L_i$ and the other in $L_{i+1}$, and we ``split'' the vertices in $L_{i+1}$ to form $\tilde{L_{i+1}}$, such that each vertex in $\widetilde{L_{i+1}}$ has exactly one edge incident to it. More formally, consider an edge $e = (u, v)$ with $u \in L_i$ and $v \in L_{i+1}$. Then, we add a new vertex $w_{(u, v)}$ to $\widetilde{L_{i+1}}$, and we add an edge $(u, w_{(u, v)})$. Similarly, for an edge $(u', v')$ with $u' \in L_{i+1}$ and $v' \in L_i$, we add a new vertex $w_{(u', v')}$ to $\tilde{L_{i+1}}$, and add an edge $(w_{(u', v')}, v')$. Note that each vertex in $\tilde{L_{i+1}}$ has degree $1$, and is trivially bimodal. 

    For $j \ge 1$, the graph $G(i, j)$ is defined on the vertex set $\widetilde{L_{ji}} \cup L_{ji+1} \cup L_{ji+2} \ldots \cup L_{(j+1)i} \cup \widetilde{L_{(j+1)i+1}}$, where the sets $\tilde{L_{ji}}$ and $\tilde{L_{(j+1)i+1}}$ are obtained by ``splitting'' the vertices in $L_{ji}$ and $L_{(j+1)i+1}$, respectively. More formally, the edge set of $G(i, j)$ is constructed as follows. All edges with both endpoints in $L_{ji+1} \cup L_{ji+2} \ldots \cup L_{(j+1)i}$ are retained in $G(i, j)$. For each edge $(u, v)$ with $u \in L_{ji}$ and $v \in L_{ji+1}$, we add a new vertex $w_{(u, v)}$ in $\tilde{L_{ji}}$ and add an edge $(w_{(u, v)}, v)$. The other case with $(u', v')$ where $u' \in L_{ji+1}$ and $v' \in L_{ji}$ is defined analogously. Finally, the set $\tilde{L_{(j+1)i+1}}$ is defined analogously. Note that that newly created edge in $G(i, j)$ for $j \ge 0$ retains its original weight in $G_i$.  

    First, we observe that the treewidth of $G_i$ is bounded by $O(t)$, since it is obtained by taking at most $t+2$ consecutive BFS layers (\cite{CyganFKLMPPS15}). Furthermore, each edge $(u, v) \in E^{(i, i+1)}$ corresponds to exactly two copies in $G_i$ after the splitting process. Let $E'$ denote the set that includes both the copies of edges in $E^* \cap E^{(i, i+1)}$. Then, observe that $\tilde{E^*_i} \coloneqq (E^* \setminus E^{(i, i+1)}) \cup E'$ is a feasible solution for $G_i$, i.e., $G_i \setminus \tilde{E^*}$ is bimodal, and $w(\tilde{E^*_i}) \le w(E^*) + 2 \cdot w(E^* \cap E^{(i, i+1)})$. Furthermore, consider any feasible solution $F_i$ to $G_i$, and map it back to the original graph $G$ to obtain a solution $F$, as follows. If at least one copy of an edge is included in $F_i$, then we add it to $F$. We observe that $G \setminus F$ is bimodal, and $w(F) \le w(F_i)$. From this discussion and from \Cref{obs:averaging}, it follows that there exists some $i^* \in \{0, 1, \ldots, t-1\}$, such that $OPT(G_i) \le (1+2/t) \cdot OPT(G)$. 

    Now, our algorithm proceeds as follows. We create the graph $G_i$ for each $i \in \{0, 1, \ldots, t-1\}$, and use \Cref{thm:mwbs-fpt-bw} to find an optimal solution in time $2^{O(t)} \cdot n^{O(1)}$, and we map each optimal solution back to $G$ as described above. We output the minimum-weight solution found in this manner over all $i \in \{0, 1, \ldots, t-1\}$. From the previous paragraph, it follows the cost of this solution is at most $(1+2/t) \cdot OPT(G) \le (1+\epsilon) \cdot OPT(G)$.
\end{proof}

\end{document}